\newcommand{\blind}{0}
\newtheorem{lemma}{Lemma}
\newtheorem{corollary}{Corollary}
\newcommand{\mat}[1]{\bm{#1}} 
\newcommand{\thetaL}{\hat{\theta}_L}
\newcommand{\thetaU}{\hat{\theta}_U}
\newcommand{\deltaL}{\delta_L}
\newcommand{\deltaU}{\delta_U}
\newcommand{\tran}{\mathsf{T}}
\newcommand{\hthetanj}{\hat{\theta}_j}
\newcommand{\hthetarepj}{\hat{\theta}^{\text{rep}}_j}
\newcommand{\bhthetan}{\bm{\hat{\theta}}}
\newcommand{\bhthetarep}{\bm{\hat{\theta}}^{\text{rep}}}
\newcommand{\thetan}{\hat{\theta}}
\newcommand{\thetarep}{\hat{\theta}^{\text{rep}}}
\newcommand{\thetarepm}{\hat{\theta}^{\text{rep}, m}}
\newcommand{\hSigman}{\mat{\hat{\Sigma}}}
\newcommand{\hsigman}{\hat{\sigma}}
\newcommand{\hsigmasqn}{\hat{\sigma}^{2}}
\newcommand{\hsigmasqnj}{\hat{\sigma}^{2}_j}
\newcommand{\hsigmanj}{\hat{\sigma}_j}
\newcommand{\tdist}{F}
\newcommand{\hnusq}{\hat{\nu}^2}
\newcommand{\Dn}{\mat{D}}
\newcommand{\Dm}{\mat{D}^{\text{rep}}}
\begin{document}

\def\spacingset#1{\renewcommand{\baselinestretch}%
{#1}\small\normalsize} \spacingset{1}


\if0\blind
{
  \title{\bf Towards replicability with confidence intervals for the exceedance probability}
  \author{Brian D. Segal\hspace{.2cm}\\
    Flatiron Health}
  \maketitle
} \fi

\if1\blind
{
  \bigskip
  \bigskip
  \bigskip
  \begin{center}
    {\LARGE\bf Towards replicability with confidence intervals for the exceedance probability}
\end{center}
  \medskip
} \fi

\bigskip

\begin{abstract}
Several scientific fields including psychology are undergoing a replication crisis. There are many reasons for this problem, one of which is a misuse of p-values. There are several alternatives to p-values, and in this paper we describe a complement that is geared towards replication. In particular, we focus on confidence intervals for the probability that a parameter estimate will exceed a specified value in an exact replication study. These intervals convey uncertainty in a way that p-values and standard confidence intervals do not, and can help researchers to draw sounder scientific conclusions. After briefly reviewing background on p-values and a few alternatives, we describe our approach and provide examples with simulated and real data. For linear models, we also describe how confidence intervals for the exceedance probability are related to p-values and confidence intervals for parameters.
\end{abstract}

\noindent%
{\it Keywords:} p-values, reproducibility probability, Bayes factors, psychology
\vfill

\newpage
\spacingset{1.45} 

\section{Introduction}

Several scientific fields including psychology are undergoing a replication crisis \citep{ioannidis2005, prinz2011, begley2012, pashler2012}, which has led to increased attempts to understand the replicability and reliability of scientific studies \citep{open2015, johnson2017}. There are many reasons for this problem, including publication bias, p-hacking, failing to correct for multiple testing, and underpowered studies, though p-values and null hypothesis testing have come under particular scrutiny. This prompted the ASA to release a statement providing guidance on the proper interpretation and use of p-values \citep{wasserstein2016}, as well as a follow-up in which several authors provided additional perspectives and proposal for ``moving to a world beyond `$p<0.05$'\thinspace'' \citep{wasserstein2019}. 

In this article, we discuss a complement to p-values that is similar to the reproducibility probability sometimes used in clinical trials \citep{goodman1992, shao2002, deMartini2008, deMartini2012, deMartini2013, boos2011}, as well as to metrics used in quality control to assess the probability of meeting specifications \citep{meeker2017}. In particular, we focus on confidence intervals for the probability that in an exact replication study, a parameter estimate will exceed a specified value based on sampling variability alone. We refer to these intervals as confidence intervals for the exceedance probability.

Confidence intervals for the exceedance probability convey uncertainty in point estimates in a way that p-values and confidence intervals for parameter estimates (standard confidence intervals) do not. In particular, the exceedance probability can be interpreted similarly to statistical power, and confidence intervals for the exceedance probability convey the stability of a result. Even if an exact replication study is not feasible, we think that characterizing uncertainty in these terms can help researchers to contextualize uncertainty and draw sounder scientific conclusions, particularly with small to medium sample sizes. Like the p-values and standard confidence intervals they complement, confidence intervals for the exceedance probability provide information about the strength of evidence, but do not provide conclusive evidence for whether a result will replicate and are not a substitute for conducting replication studies.

Problems with replication have been prevalent in psychology, among other fields. In psychology there have been efforts to clarify the generalizability of studies \citep{brandt2014}, as well as the criteria for conducting replications \citep{simons2017} and declaring success \citep{rosenthal1997}. However, exact replications are not usually feasible \citep{stroebe2014, fabrigar2016}, which leads to between-study heterogeneity \citep{mcshane2014, vanerp2017, stanley2018, mcshane2019large}. Consequently, confidence intervals for the exceendance probability, which account for sampling variability but not between-study heterogeneity, represent a lower bound on uncertainty. However, as we show through simulations and an analysis of data collected as part of the Open Science Collaboration \citep{open2015}, even this lower bound can help to put p-values into perspective.

Confidence intervals for the exceedance probability have been used in practice in fields other than psychology. In particular, \citet[][Chapters 2.2.3 and 4.5]{meeker2017} gives examples of how confidence intervals for the exceedance probability can be used for quality control to assess the probability of meeting specifications, and \citet{shao2002} and \citet{deMartini2012, deMartini2013} demonstrate how confidence intervals for the reproducibility probability (i.e. the power of an exact replication study) can be used to make decisions in clinical trials regarding sample size and the strength of evidence for single study approvals.

We extend previous work to general linear models and the context of psychological research, and study the relationship with p-values and standard confidence intervals. For linear combinations of normal random variables, there is a clear connection between confidence intervals for the exceedance probability, standard confidence intervals, and p-values, which complements the findings of \citet{deMartini2008}. In addition to providing context for interpreting confidence intervals for the exceedance probability, these relationships provide perspective on p-values and standard confidence intervals.

The work of \citet{gelman2014} is also related, though whereas \citet{gelman2014} calculate the probability of sign and magnitude errors for fixed effect size and variance, we focus on confidence intervals that treat the estimated effect size and variance as random. In a similar vein, \citet{cumming2006} study the probability that a confidence interval from an initial study will contain the point estimate from a replication study. However, whereas \citet{cumming2006} studied these properties in aggregate and used their findings to address common misconceptions of confidence intervals and replicability, we focus on confidence intervals that can be used in individual data analyses.

In Section \ref{background}, we give an overview of common shortcomings of p-values and statistical significance thresholds, and note two prominent suggestions for addressing those issues within a hypothesis testing framework. In Section \ref{motivation_framework}, we give motivating use cases in which the exceedance probability is relevant to the scientific question, and introduce our framework for computing exceedance probabilities and associated confidence intervals. In Section \ref{linComboNorm}, we focus on the exceedance probability for linear combinations of normal random variables and show how confidence intervals for the exceedance probability are related to standard confidence intervals and p-values in this setting. In Section \ref{examples}, we give examples with both simulated and real data of how confidence intervals for the exceedance probability can be used in practice, and how they compare to p-values, standard confidence intervals, and Bayes factors. In Section \ref{conclusion}, we conclude and suggest areas for future work.

\section{Background \label{background}}

\subsection{Limitations of p-values and statistical significance thresholds \label{context}}

Let $\bm{y} \in \mathbb{R}^n$ be an observation of the random vector $\bm{Y}$ that follows a distribution with parameter $\theta \in \mathbb{R}$. Also let $T(\bm{Y}) \in \mathbb{R}$ be a test statistic for which larger values are more extreme, and let $H_0: \theta \in \Theta_0$ and $H_1: \theta \not \in \Theta_0$ be the null and alternative hypotheses. The p-value is given by $p(\bm{y}) = \sup_{\theta \in \Theta_0}\Pr(T(\bm{Y}) \ge T(\bm{y}))$. For example, suppose $Y_i \overset{\text{i.i.d.}}{\sim} N(\theta, \sigma^2)$, $i=1,\ldots,n$, with known variance $\sigma^2$. Then under the null $H_0: \theta = \theta_0$ and alternative $H_1: \theta \ne \theta_0$, we can use the statistic $T(\bm{y}) = \sqrt{n}|\bar{y} - \theta_0| / \sigma$ to obtain the two-sided p-value $p(\bm{y}) = 2[1 - \Phi(T(\bm{y}))]$ where $\bar{y}$ is the sample mean and $\Phi$ is the standard normal cumulative distribution function (CDF).

P-values are simple, scalable summaries of data that can be useful in scientific research if used appropriately. However, p-values are frequently misinterpreted. For example, p-values are commonly interpreted as one minus the probability of replication, or as the posterior probability of the null hypothesis $\Pr(H_0 | \bm{y})$ \citep{oakes1986, cohen1994, haller2002, gigerenzer2018}. As noted by several authors, sometimes the p-value is similar to the posterior probability, but in many cases it is not \citep{lindley1957, pratt1965, berger1987, casella1987, nuzzo2014}. Furthermore, even if interpreted correctly, p-values are random variables that can exhibit large amounts of variability \citep{boos2011}, which can affect the probability of replicating a small p-value in future studies.

The problems with p-values are compounded when they are used in conjunction with dichotomizing statistical significance thresholds \citep{cohen1994, gigerenzer2004, amrhein2017earth, gigerenzer2018}. The dichotomization of p-values based on whether they are above or below a threshold frequently causes researchers to  misinterpret the evidence itself as dichotomous \citep{mcshane2016, mcshane2017}, because statistical significance is commonly confused with scientific significance \citep{mccloskey1996, mcshane2017}. Furthermore, the conventional cutoff of 0.05 is based on historical convenience (being the p-value corresponding to about two standard normal deviations) as opposed to a scientific rationale \citep{kennedy2019, ruberg2019}.

Due to these concerns, many researchers have advocated for reforms, such as shifting emphasis away from p-values and towards other aspects of a study, including how the study was conducted, any problems that arose, and what analytic methods were used, as well as avoiding overconfidence in statistical inference \citep{mcshane2019, amrhein2019inferential}. Others have noted the importance of understanding common cognitive biases, prespecifying analyses, changing institutional norms, focusing on false positive rates, and the use of Bayesian modeling of interactions \citep{gelman2015, leek2017, greenland2017}. There have also been efforts to ban p-values, but these initiatives may have harmed rather than improved the quality of scientific research \citep{fricker2019}.

\subsection{Hypothesis testing alternatives}

Several suggestions have been made to alleviate the problems of the $p<0.05$ cutoff within the setting of hypothesis tests. Notably, \citet{benjamin2017} proposed to change the cutoff to $p<0.005$ in fields that have not already adopted a more stringent cutoff. However, several researchers have expressed concern that lowering the threshold would not fix the problems and may actually exacerbate them \citep{trafimow2018, lakens2018, amrhein2018remove}. Furthermore, while \citet{benjamin2017} describe the benefits of lowering the threshold, they also note that there may be other alternatives that do not involve hypothesis testing.

Another long-standing alternative is the Bayes factors \citep{jeffreys1935, jeffreys1961} (see \citet{kass1995} for an overview). The Bayes factor in favor of $H_0$ and against $H_1$ is $B_{01}(t) = \Pr(t | H_0) / \Pr(t | H_1)$, which can also be written as the ratio of the posterior odds in favor of $H_0$ to the prior odds in favor of $H_0$, i.e. $B_{01}(t) = [\Pr(H_0 | t) / \Pr(H_1 | t)] / [\Pr(H_0) / \Pr(H_1)]$. Conclusions based on the p-value do not always agree with conclusions based on the Bayes factor \citep{edwards1963, degroot1973, dickey1977, shafer1982}. For point null hypotheses, Bayes factors tend to be more conservative, i.e. Bayes factors provide less evidence against the null hypothesis than p-values \citep{berger1991}.

To conduct hypothesis tests with Bayes factors, one must use cutoff values to determine whether the observed data provides sufficient evidence to reject the null hypothesis. \citet[][Appendix B]{jeffreys1961} recommends cutoffs on the logarithmic scale for this purpose, and \citet{kass1995} note that the cutoffs proposed by \citet{jeffreys1961} are sensible in practice. Nonetheless, Bayes factors do require a cutoff threshold just as with p-values, which make Bayes factors prone to similar misuses.

\section{Exceedance probability for parameter estimates \label{motivation_framework}}

\subsection{Motivating use cases}

Suppose we are interested in estimating parameter $\bm{\theta} = (\theta_1, \ldots, \theta_d)^\tran$, particularly the $j^{\text{th}}$ element $\theta_j$, $1 \le j \le d$. Furthermore, suppose we are only interested in whether $\theta_j > c$ for some substantively meaningful cutoff $c$. Many scientific questions in psychology and other fields can be framed in this way. For example, $\theta_j$ might be the difference in response times of subjects to different stimuli, the difference in standardized test scores for students who undergo different curriculum, the increase in asthma rates per increase in ambient particulate matter, or the difference in tumor response rates between cancer patients who receive different treatments.

In all of the use cases above, if the effect size $\theta_j$ is greater than some substantively meaningful cutoff $c$, the result might warrant further study or action. Staying within a hypothesis testing framework, we could test the one-sided hypothesis $H_0: \theta_j \le c$ versus the alternative $H_1: \theta_j > c$. However, in many cases it is more informative to focus on estimation, particularly when $\theta_j$ is an interpretable quantity, because we believe this way of thinking tends to align with the mindset of many scientists.

Let $\hat{\theta}_j$ be an estimate of $\theta_j$ in an initial experiment or study. As a complement to hypothesis testing that focuses on estimation we could ask, ``given the results of the initial study, what is the probability of obtaining a $\hat{\theta}_j > c$ result in a replication study?" In a Bayesian framework, we could answer this question with the posterior predictive distribution \citep{bda2014, billheimer2019}.

In many cases a Bayesian approach may be useful for understanding uncertainty in future estimates, particularly to account for between-study heterogeneity. However, as described below, frequentist confidence intervals for the exceedance probability can also be informative, in that they describe the uncertainty in an estimate due to sampling variability alone.

\subsection{Framework \label{framework}}

Let $\Dn$ be a matrix of observed data from the initial study consisting of $n$ observations/rows. For example, in a regression problem we might have $\Dn = [\bm{y}, \bm{x}_1, \ldots, \bm{x}_d]$ where $\bm{y}, \bm{x}_j \in \mathbb{R}^n$ are the outcome and $j^{\text{th}}$ covariate, respectively. Let $\Dm$ be a matrix of data from a replication study with $m$ observations, i.e. a separate, independent dataset sampled from the same population as $\Dn$. For example, $\Dm = [\bm{y}^{\text{rep}}, \bm{x}^{\text{rep}}_1, \ldots, \bm{x}^{\text{rep}}_d]$ where $\bm{y}^{\text{rep}}, \bm{x}_j^{\text{rep}} \in \mathbb{R}^m$. Also, let $\bhthetan = \bm{\hat{\theta}}(\Dn)$ and $\bhthetarep = \bm{\hat{\theta}}(\Dm)$ be estimators of a parameter $\bm{\theta} = (\theta_1, \ldots, \theta_d)^\tran$ using datasets $\Dn$ and $\Dm$, respectively. We assume $\bhthetan$ and $\bhthetarep$ are estimated with the same procedure but different data.

We focus on normally distributed estimators with shared population parameters. Specifically, we assume that $\hthetanj \sim N(\theta_j, \sigma^2_j / n)$ and $\hthetarepj \sim N(\theta_j, \sigma^2_j / m)$, where both estimators have the same population parameters $\theta_j$ and $\sigma^2_j$, $1 \le j \le d$. The true exceedance probability that $\hthetarepj > c$ is
\begin{align}
\Pr_{\theta_j, \sigma_j}(\hthetarepj > c) &= \Pr \left( \sqrt{m} (\hthetarepj - \theta_j) / \sigma_j > \sqrt{m} (c - \theta_j) / \sigma_j \right) \nonumber \\
&= 1 - \Phi \left( \sqrt{m} (c - \theta_j) / \sigma_j \right).
\label{ep_true}
\end{align}

We aim to estimate (\ref{ep_true}) after collecting $\Dn$ but prior to collecting $\Dm$. Because we assume that $\hthetanj$ and $\hthetarepj$ share the same population parameters, we plug in $\hthetanj$ and $\hsigmanj$ to (\ref{ep_true}) to obtain the point estimate 
\begin{equation}
\Pr_{\hthetanj, \hsigmanj}(\hthetarepj > c) = 1 - \Phi \left( \sqrt{m} (c - \hthetanj) / \hsigmanj \right).
\label{ep_hat}
\end{equation}

The point estimate (\ref{ep_hat}) may not be reliable, particularly for small sample sizes or highly variable data, so it is crucial to consider confidence intervals that account for uncertainty both in $\hat{\theta}_j$ and $\hat{\sigma}_j$. For $\bhthetan$ and $\bhthetarep$ that are linear combinations of normal random variables, pointwise confidence intervals can be formed based on a t-distribution pivotal quantity as described in Section \ref{norm_ci}. In other cases, such as maximum likelihood estimators (MLEs), it may be possible to construct confidence intervals with normal approximations or bootstrap methods \citep{deMartini2013}. The confidence intervals can be reported for either a single scientifically meaningful cutoff $c$ or a series of $c$.

We allow for $m \ne n$, because even if the replication study aims to collect the same number of observations as the initial study, there might be some discrepancy due to a variety of data collection challenges or study design decisions. Consequently, it may be helpful to consider a few replication sample sizes $m$ near the initial study size $n$ to assess the sensitivity of results. 

We emphasize that $c$ is intended to represent a substantively meaningful value. While it may be challenging to select $c$ in many applied settings, doing so can better tie the statistical analysis to the scientific research question.

We also note that for a certain choice of hypothesis test and cutoff $c$, the exceedance probability (\ref{ep_true}) is equivalent to power. In particular, for the null $H_0: \theta_j \le \theta_{j0}$ and alternative $H_1: \theta_j > \theta_{j0}$, and setting $m=n$ and $c = \theta_{j0} + z_{1 - \alpha} \sigma_j / \sqrt{n}$ where $z_{1-\alpha} = \Phi(1 - \alpha)$, the exceedance probability (\ref{ep_true}) becomes $1 - \Phi(z_{1-\alpha} - \sqrt{n}(\theta_j - \theta_{j0}) / \sigma_j)$. This is the power of the test when $\sigma_j$ is known, also referred to as the population reproducibility probability \citep{boos2011}.

While this framework is motivated in terms of a replication study, the exceedance probability given by (\ref{ep_hat}) is purely a function of data collected in the initial study. Therefore, by setting $m=n$, confidence intervals for the exceedance probability can also be seen as expressing the uncertainty about $\theta_j$ on the probability scale with respect to a cutoff $c$, regardless of whether a replication study is feasible or planned.

\section{Linear combinations of normal random variables \label{linComboNorm}}

\subsection{Exceedance probability \label{finiteNormal}}

Suppose that $\bhthetan = \mat{A} \bm{y}$ for fixed $\mat{A} \in \mathbb{R}^{d \times n}$ and $\bm{y} \sim N(\bm{\mu}, \nu^2 \mat{V})$ where $\mat{V}$ is a known nonsingular and positive definite $n \times n$ matrix and $\text{E}[\bhthetan] = \bm{\theta}$, with an equivalent form for $\bhthetarep$. Then $\sqrt{n} (\bhthetan - \bm{\theta}) \sim N(\bm{0}, \mat{\Sigma})$ where $\mat{\Sigma} = n \nu^2 \mat{A} \mat{V} \mat{A}^\tran$ is the variance, with an analogous result for $\bhthetarep$. For example, for the sample mean of $n$ independent observations, we have $y_i \sim N(\mu, \nu^2)$, $i=1,\ldots,n$, $\mat{A} = (1/n,\ldots,1/n)$, and $\mat{\Sigma} = n \nu^2 \mat{A} \mat{A}^\tran = \nu^2$. For linear regression with design matrix $\mat{X} \in \mathbb{R}^{n \times d}$ and outcome $\bm{y} \sim N(\mat{X} \bm{\theta}, \nu^2 \mat{V})$, the general least squares estimate gives $\mat{A} = (\mat{X}^\tran \mat{V}^{-1} \mat{X})^{-1} \mat{X}^\tran \mat{V}^{-1}$ and $\mat{\Sigma} = n \nu^2 \mat{A} \mat{V} \mat{A}^\tran = n \nu^2 (\mat{X}^\tran \mat{V}^{-1} \mat{X})^{-1}$.

We estimate the marginal variance as $\hsigmasqnj = \hSigman_{jj}$ where $\hSigman = n \hnusq \mat{A} \mat{V} \mat{A}^\tran$ for $\hnusq = (n - d)^{-1} \|\bm{\hat{y}} - \bm{y} \|_2^2$ and fitted values $\bm{\hat{y}}$. Then as noted in Section \ref{framework}, we plug in $\hthetanj$ and $\hsigmasqnj$ to (\ref{ep_hat}) to obtain a point estimate for the marginal exceedance probability that $\hthetarepj > c$.

\subsection{Confidence intervals \label{norm_ci}}

Let $\tdist_{n-d, \delta}$ be the t-distribution with $n-d$ degrees of freedom and noncentrality parameter $\delta$. As shown in Appendix \ref{CI_sec}, which builds on \citet[][Appendix E.3.4]{meeker2017}, a two-sided $1-\alpha$ confidence interval for $\Pr_{\theta_j, \sigma_j}(\hthetarepj > c)$ is given by
\begin{equation}
\left[1-\Phi \left( \sqrt{\frac{m}{n}} \deltaU(c) \right), 1- \Phi \left( \sqrt{\frac{m}{n}} \deltaL(c) \right) \right],
\label{t_ci}
\end{equation}
where $\deltaL(c)$ and $\deltaU(c)$ are solutions to $\tdist_{n-d, \deltaL(c)}(q) = 1-\alpha/2$ and $\tdist_{n-d, \deltaU(c)}(q) = \alpha/2$ for
\begin{equation}
q = \sqrt{n}(c - \hthetanj) / \hsigmanj.
\label{q}
\end{equation}
Similarly, one-sided $1-\alpha$ confidence intervals can be obtained by solving $\tdist_{n-d, \deltaL(c)}(q) = 1-\alpha$ and $\tdist_{n-d, \deltaU(c)}(q) = \alpha$ to get bounds for upper and lower confidence intervals, respectively. As described in Appendix \ref{CI_sec}, these confidence intervals are formed from a t-distribution pivotal quantity and account for uncertainty in both $\hat{\theta}_j$ and $\hat{\sigma}_j$. Consequently, they are able to maintain their nominal coverage probability even in small samples when the model is correctly specified, as demonstrated in Appendix \ref{cov_prob_sec}.

\citet{meeker2017} focus on confidence intervals for the sample mean and $m = n$. However, as we show in Appendix \ref{CI_sec}, it is straightforward to extend the approach of \citet{meeker2017} to arbitrary linear combinations of normal random variables, $d>1$ mean parameters, and $m \ne n$.

We note that \citet{shao2002} and \citet{deMartini2008} provide confidence interval procedures for the reproducibility probability with a focus on test statistics, such as studentized means. \citet{shao2002} and \citet{deMartini2008} use essentially the same pivotal quantity as \citet{meeker2017} to obtain confidence intervals for the noncentrality parameter, but plug the lower bound for the noncentrality parameter into a t-distribution instead of a normal distribution to obtain a lower bound on power.

\subsection{Relationship to confidence intervals for $\theta$ \label{ci_relation}}

In this section, we analyze the relationship between confidence intervals for $\Pr_{ \theta_j, \sigma_j}(\hthetarepj > c)$ and confidence intervals for $\theta_j$. To simplify notation, throughout this section we drop the subscript $j$, though we assume that $\theta = \theta_j$ where $\bm{\theta} = (\theta_1, \ldots, \theta_d)^\tran$, $1 \le j \le d$. We also make the sample size in the replication study explicit in the notation as $\thetarepm$. We use $t_{n-d, 1-\alpha/2} = \tdist_{n-d, 0}^{-1}(1-\alpha/2)$ to denote the $1-\alpha/2$ quantile of the central t-distribution with $n-d$ degrees of freedom.

The results in this section given by Corollaries \ref{ep_half_corollary} and \ref{tube_corollary} are demonstrated in Figure \ref{ep_corollaries}. In Figure \ref{ep_corollaries}, the x-axis shows the cutoff $c$, the y-axis shows the exceedance probability, the solid black S-shaped curve shows the point estimate for the exceedance probability, and the gray area shows the point-wise confidence intervals for the exceedance probability. The pointwise confidence interval for a cutoff $c$ is given by the vertical slice through the plot that intersects the x-axis at $c$. For example, for $m=100$ and $c = \thetaL$, the point estimate is $\Pr_{\hat{\theta}_j, \hat{\sigma}_j}(\hthetarepj > \thetaL) \approx 1.0$ and the 95\% confidence interval for the exceedance probability is $[0.5, 1]$. The point estimate and 95\% confidence interval for $\hat{\theta}$ are also shown as a point and horizontal error bar, respectively.

We begin by stating Lemma \ref{delta_zero_lemma}, which is the basis for the subsequent results in this section. Throughout, we assume the data $\Dn$, sample size $n \in \mathbb{N}$ and estimates $\thetan \in \mathbb{R}$ and $\hsigman \in (0, \infty)$ are fixed.
\begin{lemma}
Let $\thetaL = \thetan - t_{n-d, 1-\alpha/2} \hsigman / \sqrt{n}$ and $\thetaU = \thetan + t_{n-d, 1-\alpha/2} \hsigman / \sqrt{n}$. Then $\deltaU(\thetaL) = 0$, and $\deltaL(\thetaU) = 0$.
\label{delta_zero_lemma}
\end{lemma}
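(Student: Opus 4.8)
The plan is to substitute the endpoints $\thetaL$ and $\thetaU$ directly into the definition of $q$ from (\ref{q}), and then verify that $\delta = 0$ solves the relevant defining equation in each case, invoking the symmetry of the central t-distribution together with monotonicity of the noncentral t-CDF in its noncentrality parameter to pin down uniqueness.

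First I would compute $q$ at $c = \thetaL$. Since in this section we drop the subscript $j$, (\ref{q}) reads $q = \sqrt{n}(c - \thetan)/\hsigman$, and plugging in $c = \thetaL = \thetan - t_{n-d, 1-\alpha/2}\,\hsigman/\sqrt{n}$ gives $q = -t_{n-d, 1-\alpha/2}$. By definition $\deltaU(\thetaL)$ solves $\tdist_{n-d, \deltaU(\thetaL)}(q) = \alpha/2$, so it suffices to show that $\delta = 0$ satisfies $\tdist_{n-d, 0}(-t_{n-d, 1-\alpha/2}) = \alpha/2$. This follows from the symmetry of the central t-distribution about zero: using $\tdist_{n-d, 0}(-x) = 1 - \tdist_{n-d, 0}(x)$ and the definition $t_{n-d, 1-\alpha/2} = \tdist_{n-d, 0}^{-1}(1-\alpha/2)$, we obtain $\tdist_{n-d, 0}(-t_{n-d, 1-\alpha/2}) = 1 - (1-\alpha/2) = \alpha/2$, as required.

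The argument for $\deltaL(\thetaU) = 0$ is entirely parallel. Substituting $c = \thetaU = \thetan + t_{n-d, 1-\alpha/2}\,\hsigman/\sqrt{n}$ into (\ref{q}) yields $q = t_{n-d, 1-\alpha/2}$, and since $\deltaL(\thetaU)$ solves $\tdist_{n-d, \deltaL(\thetaU)}(q) = 1-\alpha/2$, I would simply note that $\tdist_{n-d, 0}(t_{n-d, 1-\alpha/2}) = 1-\alpha/2$ holds by the very definition of $t_{n-d, 1-\alpha/2}$. Hence $\delta = 0$ solves the defining equation in both cases.

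The only genuine subtlety, and the step I expect to be the main (if modest) obstacle, is uniqueness of the solution $\delta$. To conclude that $\deltaU(\thetaL)$ and $\deltaL(\thetaU)$ actually equal $0$, rather than merely admitting $0$ as one solution, I would record that for fixed degrees of freedom and fixed argument $q$, the noncentral t-CDF $\tdist_{n-d, \delta}(q)$ is strictly decreasing in $\delta$, since increasing the noncentrality shifts probability mass to the right and thereby lowers the probability below any fixed threshold. This monotonicity guarantees that each of the equations defining $\deltaL$ and $\deltaU$ has a unique root, so exhibiting $\delta = 0$ as a solution forces $\deltaU(\thetaL) = \deltaL(\thetaU) = 0$.
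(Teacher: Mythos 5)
Your proposal is correct and follows essentially the same route as the paper's proof: substitute the endpoints into (\ref{q}) to get $q = \mp t_{n-d,1-\alpha/2}$, use symmetry of the central t-distribution, and verify that $\delta = 0$ solves the defining equation. Your explicit appeal to strict monotonicity of $\tdist_{n-d,\delta}(q)$ in $\delta$ to secure uniqueness is a welcome touch of extra care; the paper handles this implicitly via its ``if and only if'' claim, with the monotonicity fact recorded in Appendix \ref{CI_sec}.
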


\begin{proof}[Proof of Lemma \ref{delta_zero_lemma}]
Let $c = \thetaL$. Then the argument $q$ to the non-central t-distribution given by (\ref{q}) is
\begin{align*}
q &= \frac{\sqrt{n}(c - \thetan)}{\hsigman} \\
  &= \frac{\sqrt{n} \left( \thetan - t_{n-d, 1-\alpha/2} \hsigman / \sqrt{n} - \thetan \right)}{\hsigman} \\
  &= -t_{n-d, 1-\alpha/2}.
\end{align*}
Therefore, $\deltaU(\thetaL)$ is the solution to $\tdist_{n-d, \deltaU(\thetaL)}(-t_{n-d, 1-\alpha/2}) = \alpha/2$. By the symmetry of the central t-distribution about zero, we have $-t_{n-d, 1-\alpha/2} = t_{n-d, \alpha/2}$. Consequently, $\tdist_{n-d, \deltaU(\thetaL)}(-t_{n-d, 1-\alpha/2}) = \tdist_{n-d, \deltaU(\thetaL)}(t_{n-d, \alpha/2})$, and by definition $\tdist_{n-d, \deltaU(\thetaL)}(t_{n-d, \alpha/2}) = \alpha / 2$ if and only if $\deltaU(\thetaL) = 0$. This shows that $\deltaU(\thetaL) = 0$. An analogous argument shows that $\deltaL(\thetaU) = 0$, which proves the lemma.
\end{proof}

We now describe how confidence intervals for $\theta$ can be read from the plot of $\Pr_{\thetan, \hsigman}(\thetarepm > c)$ presented in Figure \ref{ep_corollaries}. From (\ref{ep_hat}), we have $\Pr_{\thetan, \hsigman}(\thetarepm > \thetan) = 0.5$ for all $m$. Corollary \ref{ep_half_corollary} gives a similar result for the confidence intervals around $\Pr_{\thetan, \hsigman}(\thetarepm > \thetaL)$ and $\Pr_{\thetan, \hsigman}(\thetarepm > \thetaU)$. We note that Corollary \ref{ep_half_corollary} in this paper is complementary to Corollaries 1 and 2 in \citet{deMartini2008}, though \citet{deMartini2008} focuses on hypothesis testing.

\begin{corollary}
Let $\thetaL$ and $\thetaU$ be as defined in Lemma \ref{delta_zero_lemma}. Then the lower bound of the two-sided $1-\alpha$ confidence interval around $\Pr_{\thetan, \hsigman}(\thetarepm > \thetaL)$ is equal to 0.5, and the upper bound of the two-sided $1-\alpha$ confidence interval around $\Pr_{\thetan, \hsigman}(\thetarepm > \thetaU)$ is equal to 0.5.
\label{ep_half_corollary}
\end{corollary}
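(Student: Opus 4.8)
The plan is to read the corollary directly off the confidence-interval formula (\ref{t_ci}) once Lemma \ref{delta_zero_lemma} is in hand, so that almost all of the substantive work has already been done. Recalling that the two-sided $1-\alpha$ interval in (\ref{t_ci}) has lower endpoint $1 - \Phi\!\left(\sqrt{m/n}\,\deltaU(c)\right)$ and upper endpoint $1 - \Phi\!\left(\sqrt{m/n}\,\deltaL(c)\right)$, the key observation is that the two claims of the corollary concern exactly the two endpoints for which Lemma \ref{delta_zero_lemma} supplies a vanishing noncentrality parameter.

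First I would evaluate the lower endpoint at $c = \thetaL$. By Lemma \ref{delta_zero_lemma} we have $\deltaU(\thetaL) = 0$, so the lower endpoint equals $1 - \Phi\!\left(\sqrt{m/n}\cdot 0\right) = 1 - \Phi(0) = 0.5$, using $\Phi(0) = 0.5$. This is precisely the first assertion, that the lower bound of the two-sided interval around $\Pr_{\thetan, \hsigman}(\thetarepm > \thetaL)$ equals $0.5$. Next I would evaluate the upper endpoint at $c = \thetaU$. Again by Lemma \ref{delta_zero_lemma} we have $\deltaL(\thetaU) = 0$, so the upper endpoint equals $1 - \Phi\!\left(\sqrt{m/n}\cdot 0\right) = 0.5$, which is the second assertion.

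There is essentially no obstacle here beyond careful bookkeeping, since Lemma \ref{delta_zero_lemma} does the genuine work. The only two points requiring attention are (i) correctly matching $\deltaU$ to the lower endpoint and $\deltaL$ to the upper endpoint of (\ref{t_ci}), as dictated by the monotonicity of $\Phi$ together with the ordering $\deltaU(c) \ge \deltaL(c)$; and (ii) noting that the factor $\sqrt{m/n}$ is annihilated by the vanishing noncentrality parameter, so that the $0.5$ values hold for every replication sample size $m$, not merely for $m = n$. The real content, namely that the relevant noncentrality parameters are zero at $c = \thetaL$ and $c = \thetaU$, is exactly what the preceding lemma establishes, so the corollary reduces to these two one-line substitutions.
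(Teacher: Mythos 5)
Your proposal is correct and follows essentially the same route as the paper: substitute $c = \thetaL$ (resp.\ $c = \thetaU$) into the endpoints of (\ref{t_ci}), invoke Lemma \ref{delta_zero_lemma} to get a vanishing noncentrality parameter, and conclude $1 - \Phi(0) = 0.5$ independently of $m$. Your added remarks about matching $\deltaU$ to the lower endpoint and the irrelevance of the $\sqrt{m/n}$ factor are consistent with the paper's argument.
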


\begin{proof}[Proof of Corollary \ref{ep_half_corollary}]
The two-sided $1-\alpha$ confidence interval about $\Pr_{\thetan, \hsigman}(\thetarepm > c)$ is given by $[1-\Phi(\sqrt{m/n} \deltaU(c)), 1-\Phi(\sqrt{m/n} \deltaL(c))]$ for $\deltaU$ and $\deltaL$ described in Section \ref{finiteNormal}. By Lemma \ref{delta_zero_lemma}, $\deltaU(\thetaL) = 0$ for all $m$. Therefore, for all $m$, the lower bound of the two-sided $1-\alpha$ confidence interval about $\Pr_{\thetan, \hsigman}(\thetarepm > \thetaL)$ is $1-\Phi(\sqrt{m/n} \deltaU(\thetaL)) = 1-\Phi(0) = 0.5$. An analogous argument shows that for all $m$, the upper bound of the two-sided $1-\alpha$ confidence interval about $\Pr_{\thetan, \hsigman}(\thetarepm > \thetaU)$ is equal to 0.5. This proves the corollary.
\end{proof}

As a consequence of Corollary \ref{ep_half_corollary}, and noting that $[\thetaL, \thetaU]$ as given in Lemma \ref{delta_zero_lemma} is a two-sided $1-\alpha$ confidence interval for $\theta$, it follows that the two-sided $1-\alpha$ confidence interval for $\theta$ can be read directly from the plot of $\Pr_{\thetan, \hsigman}(\thetarep > c)$ shown in Figure \ref{ep_corollaries}. This is done by drawing a horizontal line across the plot at $\Pr_{\thetan, \hsigman}(\thetarep > c) = 0.5$ on the y-axis and finding the leftmost and rightmost points $c$ at which the horizontal line intersects the confidence bands.

We now describe the asymptotic behavior of the confidence intervals for $\Pr_{\theta, \sigma}(\thetarepm > c)$ as $m$ goes to infinity. First, we note that as $m \rightarrow \infty$, $\Pr_{\thetan, \hsigman}(\thetarepm > c) \rightarrow 1$ for $c < \thetan$ and $\Pr_{\thetan, \hsigman}(\thetarepm > c) \rightarrow 0$ for $c > \thetan$. By Corollary \ref{tube_corollary}, the confidence interval around $\Pr_{\thetan, \hsigman}(\thetarepm > c)$ converges in a similar manner, which is demonstrated in Figure \ref{ep_corollaries}.

\begin{corollary}
Let $\thetaL$ and $\thetaU$ be as defined in Lemma \ref{delta_zero_lemma}. Then
\begin{equation}
1-\Phi \left(\sqrt{m/n} \deltaU(c) \right) \rightarrow
\begin{cases}
1 & c < \thetaL \\
0.5 & c = \thetaL \\
0 & c > \thetaL
\end{cases}
\quad \text{as} \quad m \rightarrow \infty
\label{cond1}
\end{equation}
and 
\begin{equation}
1-\Phi \left(\sqrt{m/n} \delta_L(c) \right) \rightarrow
\begin{cases}
1 & c < \thetaU \\
0.5 & c = \thetaU \\
0 & c > \thetaU
\end{cases}
\quad \text{as} \quad m \rightarrow \infty.
\label{cond2}
\end{equation}
\label{tube_corollary}
\end{corollary}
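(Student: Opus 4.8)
The plan is to reduce both limits to the sign of a single, $m$-free quantity and then read off the limits of $\Phi$. The crucial observation is that $q = \sqrt{n}(c-\thetan)/\hsigman$ depends only on the fixed initial-study quantities, so $\deltaU(c)$ and $\deltaL(c)$ are constants with no dependence on $m$; the entire $m$-dependence in (\ref{cond1}) and (\ref{cond2}) enters through the scale factor $\sqrt{m/n}$. Hence, once the sign of each noncentrality parameter is known, the limits are immediate: as $m \to \infty$ we have $\sqrt{m/n}\,\delta \to +\infty$, $0$, or $-\infty$ according to whether $\delta > 0$, $\delta = 0$, or $\delta < 0$, and therefore $1 - \Phi(\sqrt{m/n}\,\delta) \to 0$, $0.5$, or $1$, which matches the three cases in each display.

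The heart of the argument is to classify the signs of $\deltaU(c)$ and $\deltaL(c)$ relative to $\thetaL$ and $\thetaU$. I would do this by showing each is strictly increasing in $c$ and then anchoring at the zero crossing provided by Lemma \ref{delta_zero_lemma}. Recall $\deltaU(c)$ solves $\tdist_{n-d,\deltaU(c)}(q) = \alpha/2$ and $\deltaL(c)$ solves $\tdist_{n-d,\deltaL(c)}(q) = 1-\alpha/2$, with $q = q(c) = \sqrt{n}(c-\thetan)/\hsigman$ strictly increasing in $c$. The monotonicity rests on two standard properties of the noncentral t CDF at fixed degrees of freedom: $\tdist_{n-d,\delta}(q)$ is strictly increasing in $q$ (it is a CDF) and strictly decreasing in $\delta$ (larger noncentrality shifts mass rightward). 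Given $c_1 < c_2$, so $q(c_1) < q(c_2)$, a short contradiction argument gives $\deltaU(c_1) < \deltaU(c_2)$: if instead $\deltaU(c_2) \le \deltaU(c_1)$, then $\tdist_{n-d,\deltaU(c_2)}(q(c_2)) \ge \tdist_{n-d,\deltaU(c_1)}(q(c_2)) > \tdist_{n-d,\deltaU(c_1)}(q(c_1)) = \alpha/2$, contradicting the defining equation $\tdist_{n-d,\deltaU(c_2)}(q(c_2)) = \alpha/2$. The identical computation, with target $1-\alpha/2$, yields strict monotonicity of $\deltaL(c)$. The same two properties, together with $\tdist_{n-d,\delta}(q) \to 1$ as $\delta \to -\infty$ and $\to 0$ as $\delta \to +\infty$, also guarantee existence and uniqueness of the solutions, so $\deltaU$ and $\deltaL$ are genuinely well-defined functions of $c$.

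With monotonicity established, Lemma \ref{delta_zero_lemma} closes the classification: since $\deltaU(\thetaL) = 0$ and $\deltaU$ is increasing, we get $\deltaU(c) < 0$ for $c < \thetaL$, $\deltaU(\thetaL) = 0$, and $\deltaU(c) > 0$ for $c > \thetaL$; likewise $\deltaL(\thetaU) = 0$ gives the three signs of $\deltaL(c)$ across $c < \thetaU$, $c = \thetaU$, $c > \thetaU$. Feeding these signs into the $\Phi$ limits from the first paragraph produces exactly (\ref{cond1}) and (\ref{cond2}). The main obstacle, and really the only nonroutine ingredient, is justifying the strict monotonicity of the noncentral t CDF in the noncentrality parameter; once that is granted, the remainder is the sign bookkeeping plus the elementary limits of $\Phi$ at $\pm\infty$ and at $0$.
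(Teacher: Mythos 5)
Your proposal is correct and follows essentially the same route as the paper's proof: both arguments observe that $\deltaL(c)$ and $\deltaU(c)$ are free of $m$, use Lemma \ref{delta_zero_lemma} together with strict monotonicity of $\deltaU$ and $\deltaL$ in $c$ to classify their signs relative to $\thetaL$ and $\thetaU$, and then apply the limits of $\Phi$ at $\pm\infty$ and at $0$. The only difference is that you supply a proof of the strict monotonicity of $\deltaU(c)$ and $\deltaL(c)$ (via the monotonicity of the noncentral $t$ CDF in its argument and in its noncentrality parameter), a step the paper asserts without argument, so your write-up is, if anything, more complete.
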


\begin{proof}[Proof of Corollary \ref{tube_corollary}]
By Lemma \ref{delta_zero_lemma}, $\deltaU(\thetaL) = 0$. Furthermore $\deltaU(c)$ is a strictly monotone increasing function of $c$. Consequently, $\deltaU(c) < 0$ for $c < \thetaL$, and $\deltaU(c) > 0$ for $c > \thetaL$. It follows that as $m \rightarrow \infty$, $\sqrt{m/n} \deltaU(c) \rightarrow -\infty$ for $c < \thetaL$ and $\sqrt{m/n} \deltaU(c) \rightarrow \infty$ for $c > \thetaL$. Therefore, as $m \rightarrow \infty$, $1-\Phi(\sqrt{m/n} \deltaU(c)) \rightarrow 1$ for $c < \thetaL$ and $1-\Phi(\sqrt{m/n} \deltaU(c)) \rightarrow 0$ for $c > \thetaL$. Furthermore, because $\deltaU(\thetaL) = 0$, we have $1-\Phi(\sqrt{m/n}\deltaU(\thetaL)) = 0.5$ for all $m$. This shows that the conditions in (\ref{cond1}) hold. An analogous argument shows that the conditions in (\ref{cond2}) hold, which proves the corollary.
\end{proof}

\begin{figure}[H]
\centering
\includegraphics[scale = 0.65]{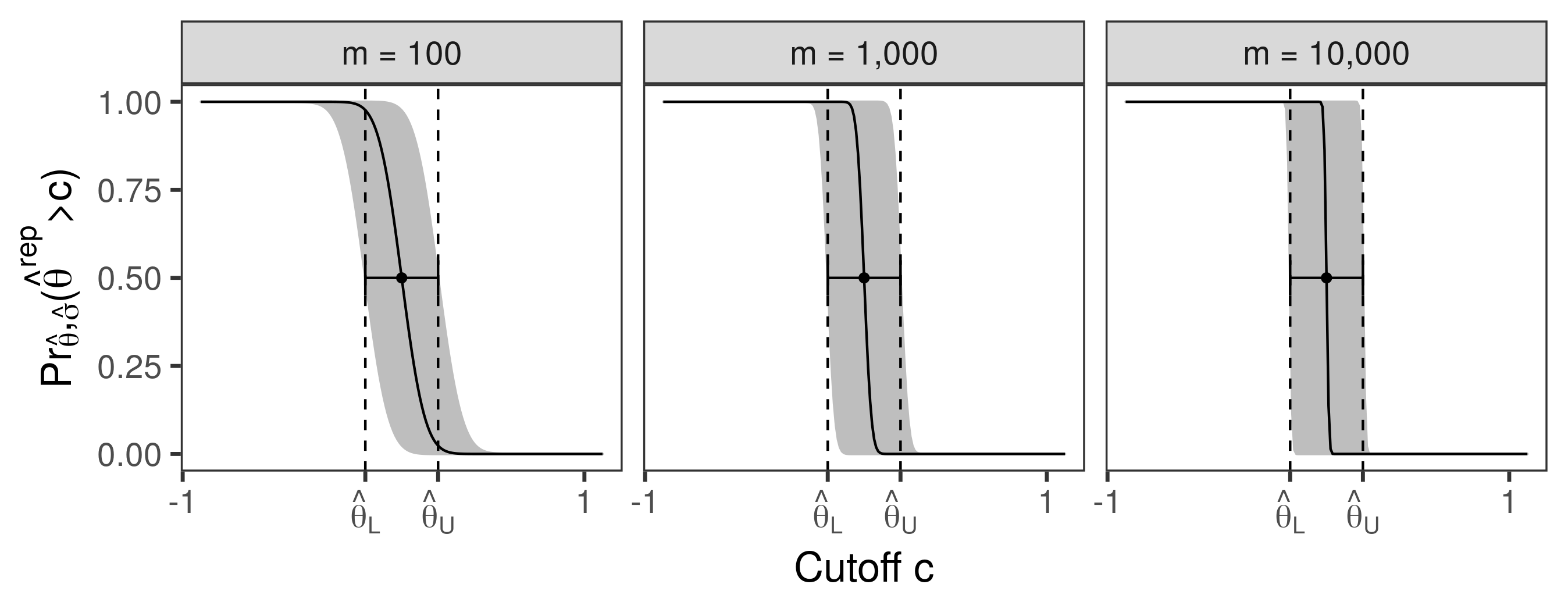}
\caption{Exceedance probability for the sample mean (data simulated as described in Section \ref{examples}) with $n = 100$. The solid black line shows $\Pr_{\thetan, \hsigman}(\thetarepm > c)$ and the gray area shows the 95\% pointwise confidence intervals. The pointwise confidence interval for a cutoff $c$ is given by the vertical slice through the plot that intersects the x-axis at $c$. The point estimate $\thetan$ and confidence interval $[\thetaL, \thetaU] = [\thetan \pm t_{n-1,1-\alpha/2} \hat{\sigma}_n/\sqrt{n}]$ for $\alpha = 0.05$ are shown by the single point and horizontal error bars at $\Pr_{\thetan, \hsigman}(\thetarepm > c) = 0.5$. Large $m$ shown to demonstrate Corollary \ref{tube_corollary}.}
\label{ep_corollaries}
\end{figure}

Corollary \ref{tube_corollary} provides a way to interpret the $1-\alpha$ confidence interval $[\thetaL, \thetaU]$ in terms of the estimation uncertainty in a replication study as the sample size of the replication study becomes large. In particular, as the sample size $m$ of the replication study goes to infinity, the probability of obtaining an estimate $\thetarepm \in [\thetaL, \thetaU]$ goes to $1-\alpha$. Conceptually, there is no sampling variability in the replication study in the limit as $m \rightarrow \infty$, so all sampling variability is from the initial study of size $n$. Because $[\thetaL, \thetaU]$ covers the true parameter $\theta$ with probability $1-\alpha$, it is not surprising that in the limit as $m \rightarrow \infty$, $[\thetaL, \thetaU]$ also covers $\thetarepm$ with probability $1-\alpha$.

This slightly different emphasis might be useful for teaching purposes to help reinforce the definition of confidence intervals. In particular, by emphasizing the uncertainty in a random but observable parameter estimate, as opposed to the uncertainty about a fixed but unobservable parameter value, this interpretation might be more accessible in application-oriented introductory settings. This interpretation requires that the replication study be identical to the initial study in all respects except for sample size.

While Corollary \ref{tube_corollary} demonstrates a connection between confidence intervals for the exceedance probability and confidence intervals for $\theta$ as $m \rightarrow \infty$, we emphasize that in practice we recommend choosing $m$ similar to $n$ to avoid misrepresenting the amount of certainty in the replication.

We also note that all of the above results can be adapted to one-sided confidence intervals. For example, lower confidence intervals for $\theta$ have the form $[\thetaL, \infty)$ where $\thetaL = \thetan - t_{n-d, 1-\alpha} \hsigman / \sqrt{n}$. Therefore, by replacing $t_{n-d, 1-\alpha/2}$ with $t_{n-d, 1-\alpha}$ in Lemma \ref{delta_zero_lemma}, the results in Corollary \ref{ep_half_corollary} and \ref{tube_corollary} related to $\thetaL$ also hold for one-sided $1-\alpha$ confidence intervals.

\subsection{Relationship to p-values}

In this section, we describe the relationship between confidence intervals for the exceedance probability and p-values. Similar to Section \ref{ci_relation}, throughout this section we drop the subscript $j$, though we assume that $\theta = \theta_j$ and $\sigma^2 = \Sigma_{jj}$ where $\bm{\theta} = (\theta_1, \ldots, \theta_d)^\tran$ and $\Sigma \in \mathbb{R}^{d \times d}$, $1 \le j \le d$. We note that \citet{deMartini2012} provides a complementary comparison with a focus on hypothesis testing.

\subsubsection{Two-sided hypothesis test \label{2_sided_p}}

For observed data $\bm{D}$, null hypothesis $H_0: \theta = c$, and alternative $H_1: \theta \ne c$, the p-value is given by $p_c(\bm{D}) = 2[1- \tdist_{n-d, 0}(\sqrt{n}|\hat{\theta} - c|/\hat{\sigma})]$. We treat $p_c(\bm{D})$ as a function of the cutoff $c$ for fixed data $\bm{D}$.

Figure \ref{ci_p_val_overlay} shows the p-value $p_c(\bm{D})$ (solid line) for the same data used in Figure \ref{ep_corollaries}. In this example, $\bm{D} = (y_1,\ldots,y_n)^\tran$ where $y_i \overset{\text{i.i.d}}{\sim} N(\theta, \sigma^2)$, $i=1,\ldots,n$ for $n = 100$, and we estimate $\hat{\theta} = \bar{y}$, $\hat{\sigma}^2 = (n-1)^{-1} \sum_{i=1}^n (y_i - \bar{y})^2$, and $[\thetaL, \thetaU] = \hat{\theta} \pm t_{n-1, 1-\alpha/2} \hat{\sigma}$. The quantity $1-p_c(\bm{D})$ is also called the confidence curve \citep[see][]{schweder2016}.

The pointwise 95\% confidence intervals for $\Pr_{\thetan, \hsigman}(\thetarep \le c)$ are also shown in Figure \ref{ci_p_val_overlay} (gray bands) for $m = n$. We show $\Pr_{\thetan, \hsigman}(\thetarep \le c)$ in Figure \ref{ci_p_val_overlay} instead of $\Pr_{\thetan, \hsigman}(\thetarep > c)$ to facilitate comparison with the p-value.

\begin{figure}[H]
\centering
\includegraphics[scale = 0.65]{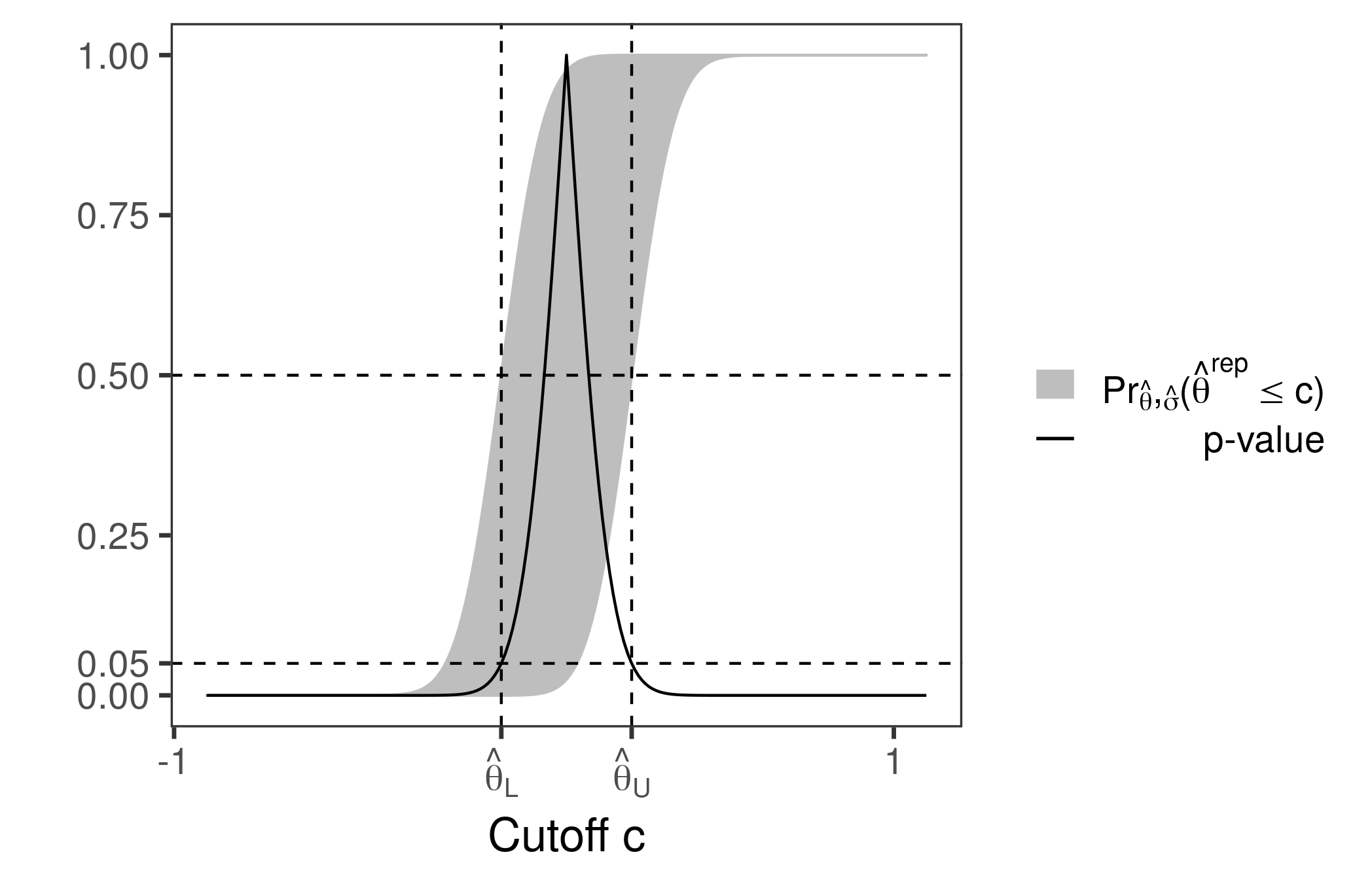}
\caption{$\Pr_{\thetan, \hsigman}(\thetarep \le c)$ compared to two-sided p-values for the sample mean with $n = m = 100$. The gray area shows the two-sided  95\% pointwise confidence intervals (the pointwise confidence interval for a cutoff $c$ is given by the vertical slice through the plot that intersects the x-axis at $c$). The solid line shows the two-sided p-value $p_c(\bm{D})$ for fixed data $\bm{D}$ and null hypotheses $H_0: \theta = c$, where $c$ is given on the x-axis.}
\label{ci_p_val_overlay}
\end{figure}

By definition, $\{\thetaL, \thetaU \} = \{c : p_c(\bm{D}) = \alpha \}$. In other words, the $1-\alpha$ two-sided confidence interval for $\theta$ can be found by drawing a horizontal line at $\alpha$ on the y-axis in Figure \ref{ci_p_val_overlay} and finding the leftmost and rightmost cutoffs $c$ at which the horizontal line intersects the solid curve. This is shown in Figure \ref{ci_p_val_overlay} for $\alpha = 0.05$.

From Corollary \ref{ep_half_corollary}, we have that for the two-sided $1-\alpha$ confidence interval around $\Pr_{\thetan, \hsigman}(\thetarep > c)$, the lower bound is 0.5 at $c = \thetaL$ and the upper bound is 0.5 at $c = \thetaU$. Equivalently, for the $1-\alpha$ confidence interval around $\Pr_{\thetan, \hsigman}(\thetarep \le c)$, the upper bound is 0.5 at $c = \thetaL$ and the lower bound is 0.5 at $c = \thetaU$. In other words, at the edges of the two-sided $1-\alpha$ confidence interval for $\theta$, the two-sided p-value is always equal to $\alpha$, and with $1-\alpha$ confidence the exceedance probability could be as low or high as 0.5. This is demonstrated in Figure \ref{ci_p_val_overlay} for $\alpha = 0.05$ by the intersections between the vertical and horizontal dashed lines.

The preceding discussion shows that when a p-value of $\alpha$ is obtained for the null hypothesis $H_0: \theta = c$ versus alternative $H_1: \theta \ne c$, then with $1-\alpha$ confidence there could be up to a 50\% probability of obtaining a future point estimate on the other side of $c$.

\subsubsection{One-sided hypothesis test}

For one-sided hypothesis tests, the p-value from a t-test is equivalent to the exceedance probability given by (\ref{ep_hat}) for large $n$. For example, suppose we wanted to test the null hypothesis $H_0: \theta \le c$ versus the alternative $H_1: \theta > c$. The p-value is given by $p_c(\bm{D}) = \tdist_{n-d, 0}(\sqrt{n}(c - \hat{\theta})/\hat{\sigma})$. For $m = n$, we have that $p_c(\bm{D}) \rightarrow \Pr_{\thetan, \hsigman}(\thetarep \le c)$ as $n \rightarrow \infty$. Notably, this makes it possible to use confidence intervals for the exceedance probability to obtain asymptotic pointwise confidence intervals for one-sided p-values.

For the null and alternative hypotheses above, the lower one-sided confidence interval is $[\thetaL, \infty)$ where $\thetaL = \hat{\theta} - t_{n-d, 1 - \alpha} \hat{\sigma}/\sqrt{n}$. As noted in Section \ref{ci_relation}, by replacing $t_{n-d, 1 - \alpha/2}$ with $t_{n-d, 1 - \alpha}$ in Lemma \ref{ep_half_corollary}, Corollary \ref{ep_half_corollary} shows that the upper bound of a one-sided $1-\alpha$ confidence interval for $ \Pr_{\thetan, \hsigman}(\thetarep \le \thetaL)$ is equal to 0.5. Because $p_c(\bm{D}) \approx \Pr_{\thetan, \hsigman}(\thetarep \le c)$ for large $n$, the upper one-sided $1-\alpha$ confidence interval for $p_{\thetaL}(\bm{D})$ is always bounded above by 0.5 for large $n$. In other words, if the one-sided p-value is equal to $\alpha$, the $1-\alpha$ confidence interval for the p-value is always $[0, 0.5]$. This is shown in Figure \ref{ci_p_val_overlay_one_sided} for $\alpha = 0.05$ using the same data as in Section \ref{2_sided_p}, and demonstrates the large amount of variability inherent to p-values near the 0.05 cutoff.

The results shown in Figure \ref{ci_p_val_overlay_one_sided} may seem at odds with known behaviors of p-values, such as being uniformly distributed under the null hypothesis and having an approximate lognormal distribution under alternative hypotheses if certain conditions apply, including asymptotic normality of the test statistic \citep{lambert1982, boos2011}. However, the viewpoint in Figure \ref{ci_p_val_overlay_one_sided} does not conflict with these well known results and in fact complements them, as described below.

\begin{figure}[H]
\centering
\includegraphics[scale = 0.65]{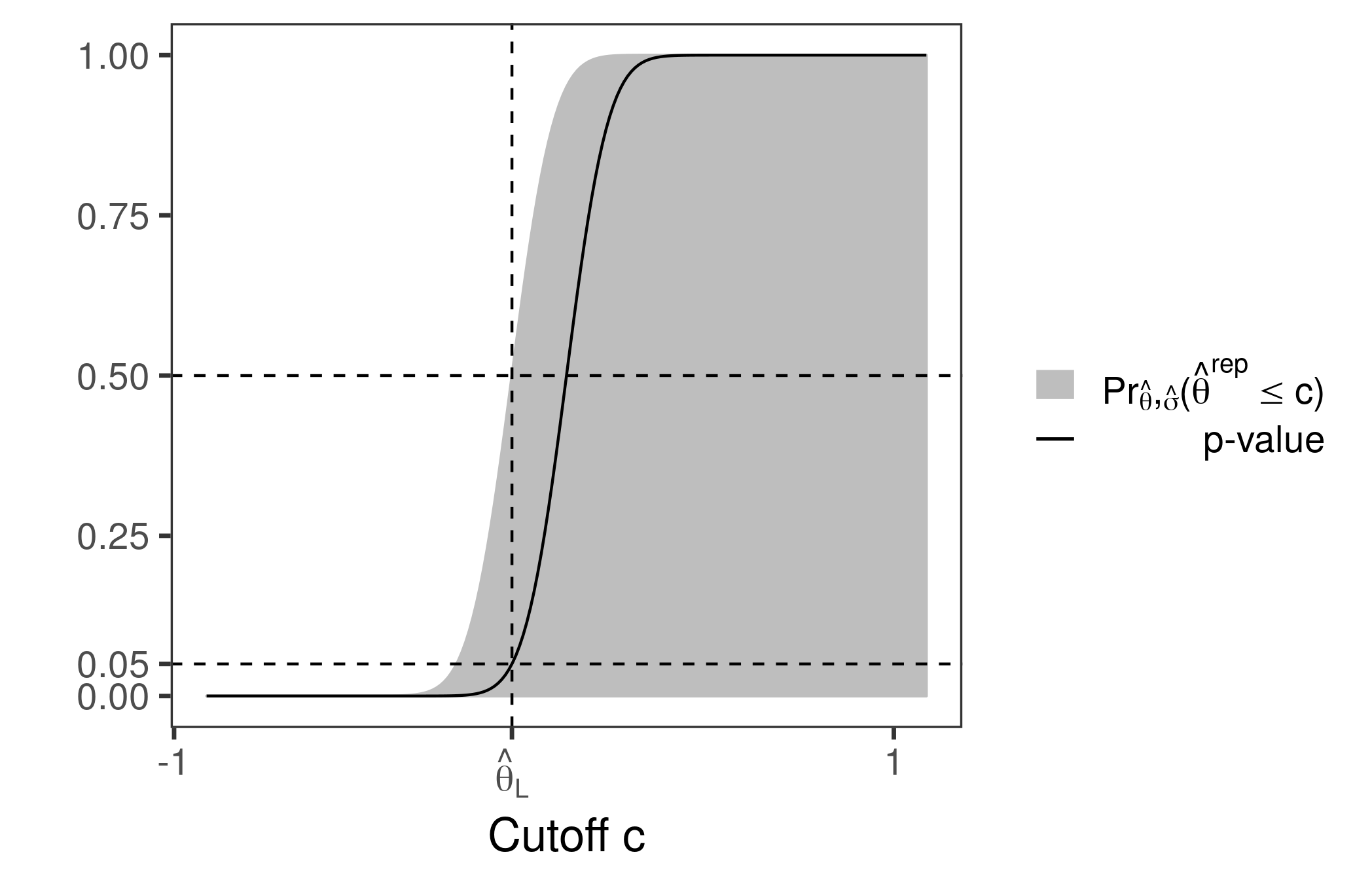}
\caption{$\Pr_{\thetan, \hsigman}(\thetarep \le c)$ and one-sided p-values for the sample mean with $n = m = 100$. The gray area shows the one-sided 95\% pointwise confidence intervals (the pointwise confidence interval for a cutoff $c$ is given by the vertical slice through the plot that intersects the x-axis at $c$). The solid curve shows the one-sided p-value $p_c(\bm{D})$ for fixed data $\bm{D}$ and null hypotheses $H_0: \theta \le c$ versus the alternative $H_1: \theta > c$, where $c$ is given on the x-axis.}
\label{ci_p_val_overlay_one_sided}
\end{figure}

Typically, the distribution of p-values is obtained upon repeated sampling of data $\bm{D}$ under a fixed null hypothesis (e.g. $H_0: \theta = 0$). In contrast, here we fix the data $\bm{D}$ and obtain confidence intervals for the p-value under different null hypotheses ($H_0: \theta = c$ for a series of $c$). 

Suppose in truth $\theta = \thetaL$ and we wanted to test the null $H_0: \theta \le \thetaL$ versus the alternative $H_1: \theta > \thetaL$. Upon sampling new data, the black solid curve in Figure \ref{ci_p_val_overlay_one_sided} would shift left or right and intersect the vertical dashed line at a different y-axis value, giving a new p-value. For example, if the shifted curve intersected the vertical dashed line at 0.2 on the y-axis, the p-value for that sample wold be 0.2. Over many such samples of data, we would obtain a uniform distribution of p-values between 0 and 1.

One can imagine that if in truth $\theta = 1$ instead of $\thetaL$ (where $\thetaL < 1$ as shown in Figure \ref{ci_p_val_overlay_one_sided}) then upon repeated sampling of data the black curve would tend to fall farther to the right then its current position and would intersect the vertical dashed line at much lower y-axis values. This would lead to a heavily left-skewed distribution of p-values, as expected under the alternative hypothesis.

\section{Examples \label{examples}}

\subsection{Simulated data \label{sim_example}}

In this section, we demonstrate how confidence intervals for the exceedance probability can be used in practice for the sample mean and how they compare to p-values, Bayes factors, and standard confidence intervals. For this example, we generated data $\Dn = (y_1, \ldots, y_n)^\tran$ where $y_i \overset{\text{i.i.d.}}{\sim} N(\theta, \sigma^2)$, $i=1,\ldots,n$, for $\theta = 0$ and $\sigma^2 = 1$. We then set $\thetan = \bar{y}$ and $\hsigmasqn = (n-1)^{-1} \sum_{i=1}^n(y_i - \bar{y})^2$.

Figure \ref{sample_mean_100} shows the simulated data for $n = 100$ observations ($\bar{y} = 0.25, \text{sd} = 1.1$) and Figure \ref{ep_mean_100} shows the exceedance probabilities with pointwise 95\% confidence intervals. In Figure \ref{ep_mean_100}, the x-axis shows the cutoff value $c$, the y-axis shows the exceedance probability, the solid black S-shaped curve shows the point estimate of the exceedance probability, and the gray area shows the 95\% pointwise confidence intervals. The pointwise confidence interval for a cutoff $c$ is given by the vertical slice through the plot that intersects the x-axis at $c$. For example, at $c=0$ the point estimate is $\Pr_{\hat{\theta}_j, \hat{\sigma}_j}(\hthetarepj > 0) \approx 1.0$ and the 95\% confidence interval is $[0.58, 1]$.

\begin{figure}[H]
\centering
\includegraphics[scale = 0.6]{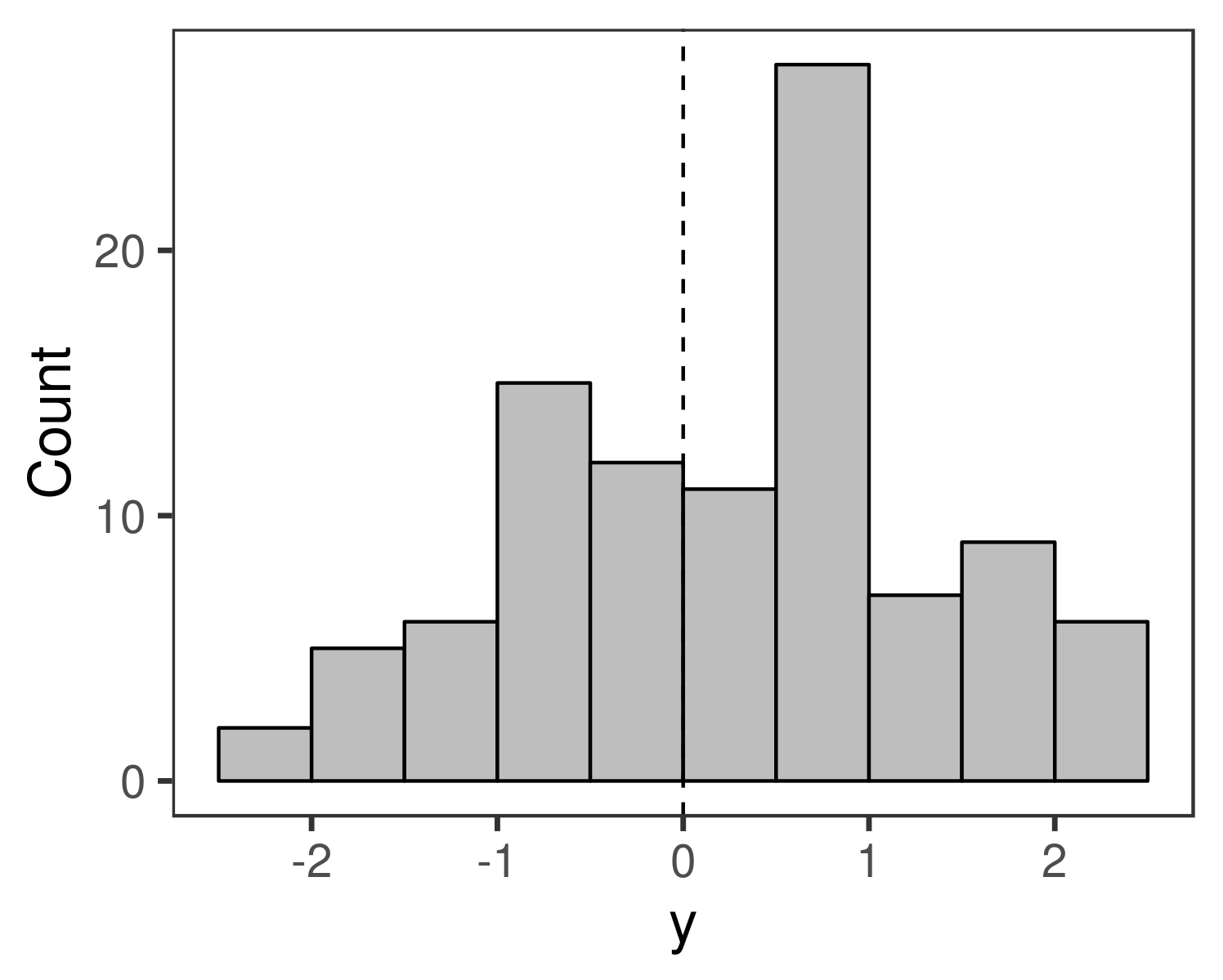}
\caption{Histogram of generated data and true mean (dashed line), $n = 100$.}
\label{sample_mean_100}
\end{figure}

\begin{figure}[H]
\centering
\includegraphics[scale = 0.65]{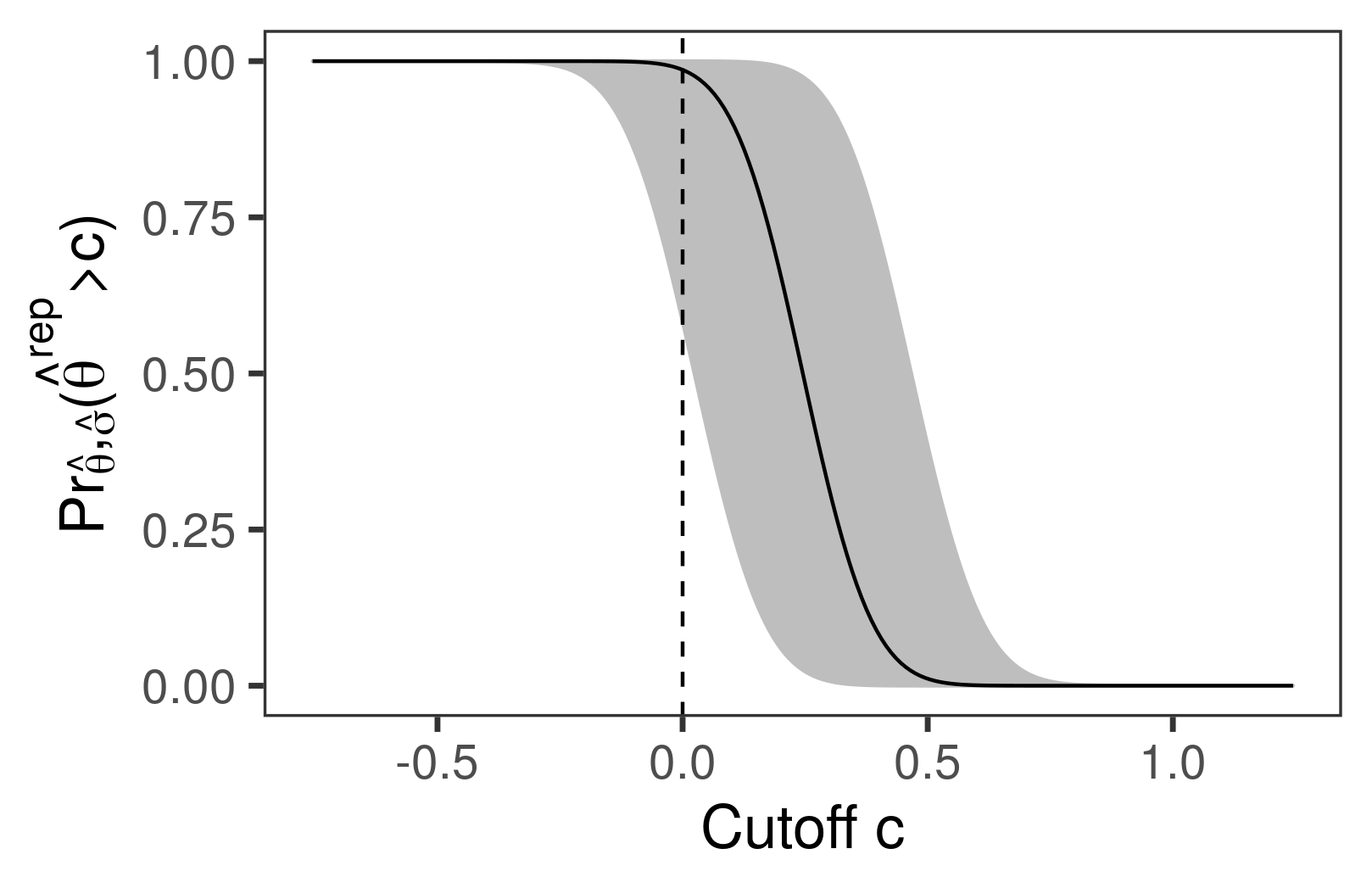}
\caption{Exceedance probability for $m = n = 100$ with pointwise 95\% confidence intervals. The pointwise confidence interval for a cutoff $c$ is given by the vertical slice through the plot that intersects the x-axis at $c$.}
\label{ep_mean_100}
\end{figure}

Suppose we wanted to test the null hypothesis $H_0: \theta \le 0$ versus the alternative $H_1: \theta > 0$. A one-sided t-test gives a p-value of 0.015 and the 1-sided 95\% confidence interval is $[0.06, \infty)$, so we would incorrectly reject $H_0$ under the standard 0.05 significance level. We also computed a Bayes factor for the one-sided null hypothesis using the \verb|BayesFactor| package \citep{BayesFactor} with a Cauchy prior on the standardized effect size and a non-informative Jeffreys prior on the variance, also called a JZS prior \citep{rouder2009, morey2011}. As  \citet{rouder2009} note, the ``JZS prior is designed to minimize assumptions about the range of effect size, and in this sense it is an objective prior.'' Using the JZS prior, we obtained a Bayes factor in favor of $H_0$ of $B_{01} = 0.016$. According to \citet{kass1995}, this is strong evidence against the null hypothesis ($1/B_{01} = 60.7$). 

However, from Figure \ref{ep_mean_100} we see that with 95\% certainty, $\Pr_{\thetan, \hsigman}(\thetarep > 0)$ could be as low as 58\% for a replication study with $m =100$ observations. In this example, the p-value and Bayes factor provide confidence that $\theta > 0$, but there is a reasonable chance that a future point estimate of $\theta$ will be less than 0. In this situation, considering the exceedance probability together with its confidence interval could help researchers to avoid making claims that may be refuted by future experiments.

We note that whereas the exceedance probability is itself directional (being the probability of observing an estimate larger than a cutoff), the confidence intervals depicted by the gray bands in Figure \ref{ep_mean_100} are two-sided and not directional. This is a key distinction to keep in mind when interpreting Figure \ref{ep_mean_100}.

It is most natural to compare the exceedance probability against one-sided confidence intervals, p-values, and Bayes factors, because they are all directional. However, for completeness we can also compare with two-sided metrics. In this example, the two-sided 95\% confidence interval for $\theta$ is $[0.024, 0.47]$ and the two-sided p-value is $0.03$. This shows that under a 0.05 significance level we would also reject the point null hypothesis $H_0: \theta = 0$ in favor of the alternative $H_1: \theta \ne 0$. In contrast, a two-sided Bayes factor computed with the \verb|BayesFactor| package \citep{BayesFactor} and JZS prior gives a result of $B_{01} = 0.91$, which provides no evidence for either the null or alternative hypothesis.

Of the p-values, standard confidence intervals, and Bayes factors above, all but the Bayes factor for the two-sided null result in an incorrect rejection of the null. Confidence intervals for the exceedance probability add context by showing that in a replication study there is a reasonable chance of obtaining a point estimate that is of the opposite sign based on sampling variability alone. This result could be distilled into the message, ``while the confidence interval only contains positive values, there is a decent chance that in a replication study the point estimate would be negative.'' In this way, confidence intervals for the exceedance probability can complement other inferential procedures.

The data in this example are unusual in that the sample mean is 2.27 standard deviations above the true population mean of 0. However, that is also the point. Due to publication bias and related issues, there are many more p-values just under 0.05 reported in psychological studies than would be expected by chance alone \citep{kuhberger2014}. In other words, many of the results reported in scientific journals are likely based on unusual data sets, in the same way that this example is unusual. We think this example helps to shed light on this issue, and that considering confidence intervals for the exceedance probability together with other metrics might help to mitigate this problem.

We also note that these different metrics (p-values, standard confidence intervals, Bayes factors, and confidence intervals for the exceedance probability) answer different questions. While they are related, it is not necessarily unexpected that they give different conclusions. By considering these metrics together, researchers can draw more informed scientific conclusions.

\subsection{Data from the Open Science Collaboration \label{application}}

We analyzed a dataset that was collected as part of the Open Science Collaboration effort to replicate 100 psychological studies \citep{open2015}. In particular, we re-analyzed data from the replication study of \citet{meixner2015} (available at \url{https://osf.io/atgp5}), which aimed to replicate the results of \citet{berry2008}. We did not have access to data from the original study of \citet{berry2008}, so we only analyzed data from the replication study of \citet{meixner2015}.

In the replication study of \citet{meixner2015}, each of 32 female volunteers was primed with an initial list of 70 words (the first and last ten of which were not used in later trials). The words were shown sequentially in a specific manner described by \citet{meixner2015}. Each volunteer was then shown a second list of 100 words, one word after the other, and asked to identify as quickly as possible whether the word in the new list was also in the initial list (50 were in the initial list and 50 were not). To replicate the main findings of \citet{berry2008}, \citet{meixner2015} hypothesized that the mean response time (RT) of misses (incorrectly saying a word was not in the initial list) would be quicker than the mean RT of correct rejections (correctly saying a word was not in the initial list).

Data from the replication study, accessible via the \textsf{R} script at \url{https://osf.io/9ivaj}, has one row per study volunteer. The relevant columns are \verb|meanRT_miss| and \verb|meanRT_cr|, the mean RT for misses and correct rejections, respectively, in milliseconds (ms). \citet{meixner2015} proceed by conducting a paired t-test, or equivalently, a t-test for whether $\bar{d}_i = \bar{y}_{i1} - \bar{y}_{i2}$ have mean zero, where $\bar{y}_{i1}$ and $\bar{y}_{i2}$ are the mean RT for volunteer $i$ for correct rejections and misses, respectively, and each $\bar{d}_i$ is treated as a single observation. Analyzing the data in this way may under-represent the number of observations, though this is not likely to alter the conclusions of \citet{meixner2015} regarding statistical significance.

Data from \citet{meixner2015} are shown in Figure \ref{application_data_hist}. A t-test for the two-sided null hypothesis $H_0: \theta = 0$ versus the alternative $H_1: \theta \ne 0$ where $\theta$ is the population difference in RTs gives a p-value of 0.023 (95\% confidence interval of $[8.65, 107]$), which confirms the finding of \citet{berry2008}. We also computed a Wilcoxon signed rank test of the null of symmetry about 0 and obtained the nearly identical p-value of 0.022, and used the \verb|BayesFactor| package \citep{BayesFactor} with the JZS prior to compute a two-sided Bayes factor in favor of the null of $B_{01} = 0.45$, which provides no evidence for either the null or alternative.

\begin{figure}[H]
\centering
\includegraphics[scale = 0.6]{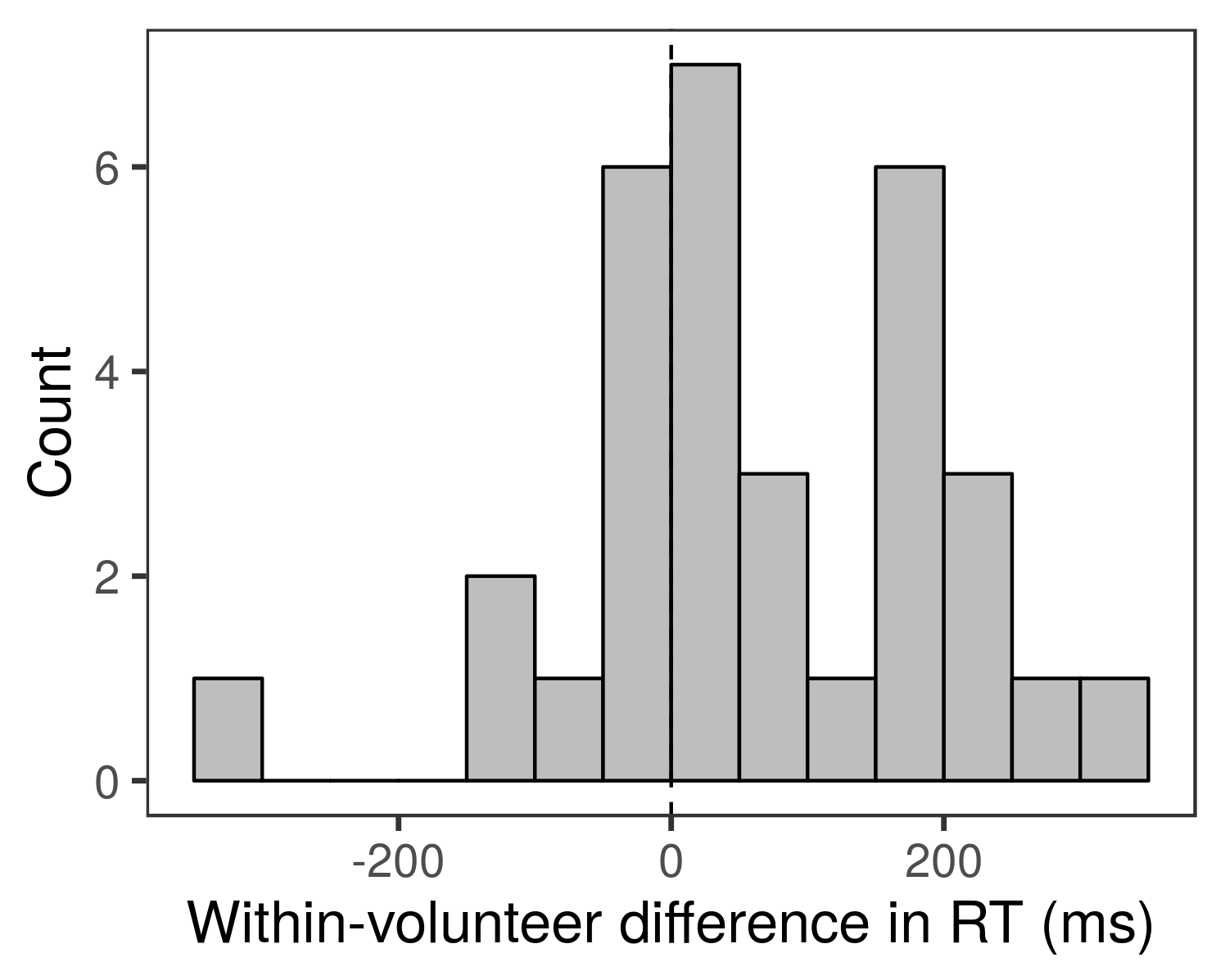}
\caption{Histogram of within-volunteer difference in RT (ms)}
\label{application_data_hist}
\end{figure}

Figure \ref{application_exceedance} shows the exceedance probability for $m = n = 32$ volunteers, along with two-sided 95\% pointwise confidence intervals. As shown in Figure \ref{application_exceedance}, with 95\% confidence the probability of obtaining a point estimate greater than zero in a future replication study could be as low as 63\%. In other words, with 95\% confidence there could be as high as a 37\% chance of a sign change in a future replication based on sampling variability alone. This conclusion is based only on sampling variability, and does not account for other aspects of the future replication study that may differ from the current one.

\begin{figure}[H]
\centering
\includegraphics[scale = 0.65]{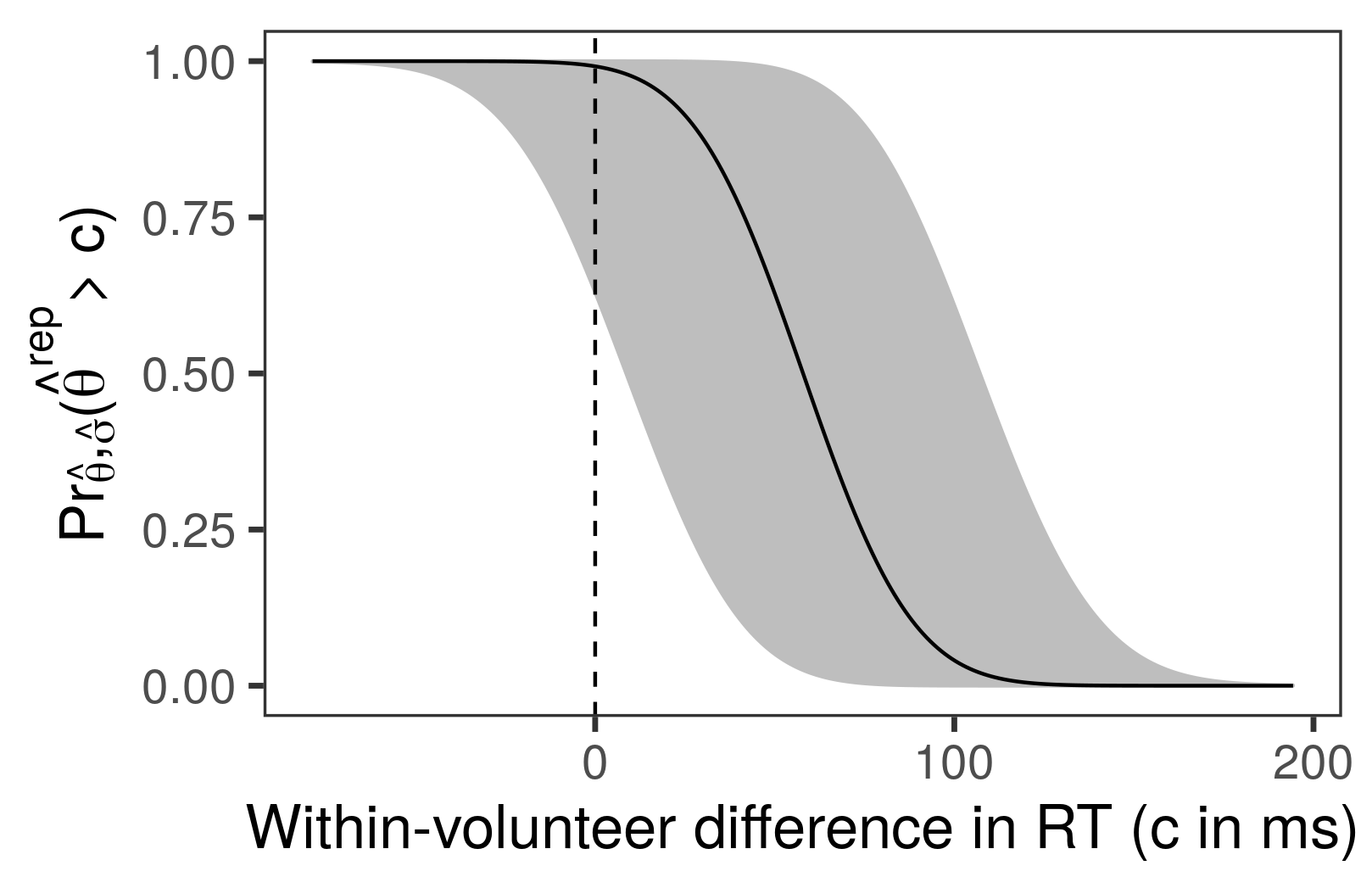}
\caption{Exceedance probability for $m=n=32$ volunteers with pointwise 95\% confidence intervals. The pointwise confidence interval for a cutoff $c$ is given by the vertical slice through the plot that intersects the x-axis at $c$.}
\label{application_exceedance}
\end{figure}

We also calculated one-sided confidence intervals, p-values, and Bayes factors for the null hypothesis $H_0: \theta \le 0$ versus the alternative $H_1: \theta > 0$. This gave a p-value of 0.011, a 95\% confidence interval of $[17.0, \infty)$ and a  Bayes factor in favor of the null of $B_{01} = 0.014$. In all cases we would reject the null hypothesis.

Of the p-values, confidence intervals, and Bayes factors above, all but the Bayes factor for the two-sided null hypothesis reject the null. However, as shown above, there is a reasonable chance that based on sampling variability alone, a future replication study would result in an estimate that is of the opposite sign.

\section{Conclusions \label{conclusion}}

In many situations, confidence intervals for the exceedance probability provide a simple, interpretable, and scientifically relevant metric that is geared towards replication. In particular, the exceedance probability is the same as power for a certain choice of one-sided hypothesis and cutoff, which is a familiar concept. Furthermore, when cutoffs are chosen to be substantively meaningful, confidence intervals for the exceedance probability provide likely probabilities of observing a substantively meaningful result in an exact replication study. This may help researchers to understand the stability of their result and complements standard confidence intervals and p-values.

The exceedance probability can only be interpreted as the reproducibility probability if the replication study is exactly the same as the initial study and enrolls participants from the exact same population. This may not hold in practice for a field such as psychology, in which there can be large between-study heterogeneity, which could cause the confidence intervals given by (\ref{t_ci}) to underestimate the amount of uncertainty in a future replication. However, as demonstrated in this article, confidence intervals for the exceedance probability can still provide an important perspective on estimation uncertainty that is not conveyed by standard confidence intervals and p-values.

We are not advocating that confidence intervals for the exceedance probability replace other metrics such as p-values and standard confidence intervals. Rather, we think that when used in conjunction with other methods, confidence intervals for the exceedance probability can provide a more complete understanding of estimation uncertainty.

We also note that confidence intervals for the exceedance probability can be used for hypothesis testing, just as standard confidence intervals can be used for the same purpose \citep{deMartini2008, deMartini2012, deMartini2013}. However, we think that in many applications it is more informative, relevant, and scientifically sound to use confidence intervals to describe the exceedance probabilities most compatible with the data as opposed to as a tool for making a yes/no decision. As \citet{amrhein2019retire} note, ``values just outside the interval do not differ substantively from those just inside the interval'' and ``not all values inside are equally compatible with the data, given the assumptions.'' The same caution is warranted here.

On a pedagogical note, the asymptotic behavior of confidence intervals for the exceedance probability as the size of an exact replication study becomes large might be useful for teaching purposes. In particular, this might help to reinforce the concept of confidence intervals in application-oriented introductory settings by emphasizing the uncertainty in a random but observable parameter estimate, as opposed to the uncertainty about a fixed but unobservable parameter value. However, in applications we recommend selecting $m$ similar in size to $n$ to accurately portray uncertainty.

In future work, it would be helpful to evaluate the performance of different confidence interval procedures for asymptotically normal estimators, such as MLEs. For example, using normal instead of t-distribution pivotal quantities, or bootstrap methods. We think it would also be helpful to investigate procedures for forming simultaneous confidence intervals for multi-parameter exceedance probabilities, and to compare the intervals given by (\ref{t_ci}) against Bayesian predictive methods. In the context of clinical trials, \citet{shao2002} found that frequentist confidence intervals for the exceedance probability were more conservative than Bayesian predictive methods, and it could be useful to study these trade-offs in the setting of psychological research as well.

\section{Supplementary material}

The R package \texttt{exceedProb} implements confidence intervals for the exceedance probability and is available on the CRAN. All code for reproducing examples and simulations in this paper is available at \url{https://github.com/bdsegal/code-for-exceedance-paper}.

\section{Acknowledgements}

I would like to thank Michael R. Elliott and Peter D. Hoff for their extremely helpful feedback and suggestions, as well as the editor, associate editor, and reviewers for their constructive comments which greatly improved the paper.

\begin{appendices}

\section{Derivation of confidence intervals \label{CI_sec}}

This appendix follows \citet[][Section E.3.4]{meeker2017} with the addition that we introduce the scaling factor $\sqrt{m/n}$ to allow for $m \ne n$ and show that the result holds for any linear combination of normal random variables and $d > 1$ mean parameters. Suppose $\bhthetan, \bhthetarep \in \mathbb{R}^d$ are linear combinations of normal random variables as in Section \ref{linComboNorm}. In particular, $\bhthetan = \mat{A} \bm{y}$ for fixed $\mat{A} \in \mathbb{R}^{d \times n}$ and  $\bm{y} \sim N(\bm{\mu}, \nu^2 \mat{V})$ where $\mat{V}$ is a known nonsingular and positive definite $n \times n$ matrix and $\text{E}[\bhthetan] = \bm{\theta}$, with an equivalent assumption for $\bhthetarep$. As described in Section \ref{finiteNormal}, the marginal exceedance probability for $\hthetarepj$, $1 \le j \le d$, is $\Pr_{\theta_j, \sigma_j}(\hthetarepj > c) = 1 - \Phi \left(\sqrt{m}(c - \theta_j)/\sigma_j \right)$ where $\Phi$ is the standard normal CDF and $\sigma^2_j = n \nu^2 (\mat{A} \mat{V} \mat{A}^\tran)_{jj}$. Let 
\[
Z = \frac{\sqrt{n}(\theta_j - \hthetanj)}{\sigma_j}, \quad \delta(c) = \frac{\sqrt{n}(c - \theta_j)}{\sigma_j}, \quad \text{and} \quad S = \frac{(n-d) \hsigmasqnj}{\sigma^2_j}
\]
where $\hsigmasqnj = n \hnusq (\mat{A} \mat{V} \mat{A}^\tran)_{jj}$, $\hnusq = (n - d)^{-1} \| \bm{\hat{y}} - \bm{y}\|^2_2$, and $\bm{\hat{y}}$ are the fitted values. Also let
\[
Q = \frac{\sqrt{n} (c - \hthetanj)}{\hsigmanj} = \frac{Z + \delta(c)}{\sqrt{S / (n-d)}}.
\]
We note that 
\[
\frac{\hsigmasqnj}{\sigma^2_j} = \frac{n \hnusq (\mat{A} \mat{V} \mat{A}^\tran)_{jj}}{n \nu^2 (\mat{A} \mat{V} \mat{A}^\tran)_{jj}} = \frac{\hnusq}{\nu^2}.
\]
Therefore, 
\[
S = \frac{(n-d) \hsigmasqnj}{\sigma^2_j} = \frac{(n-d) \hnusq}{\nu^2} \sim \chi^2_{n-d}
\]
where $\chi^2_{n-d}$ is the chi-squared distribution with $n-d$ degrees of freedom. We also have $Z \sim N(0,1)$ and $Z \perp S$. It follows that $Q \sim \tdist_{n-d, \delta(c)}$.

We note that $\tdist_{n-d, \delta(c)}$ is strictly monotone decreasing in $\delta(c)$, and serves as a pivotal quantity \citep[][Theorem 7.1]{shao2003}. Consequently, a two-sided $1-\alpha$ confidence interval for $\delta(c)$ is given by $[\deltaL(c), \deltaU(c)]$ where $\tdist_{n-d, \deltaL(c)}(q) = 1 - \alpha / 2$ and $\tdist_{n-d, \deltaU(c)}(q) = \alpha / 2$ for observed value $q = \sqrt{n} (c - \hthetanj) / \hsigmanj$. We also note that $\Pr_{\theta_j, \sigma_j}(\hthetarepj > c) = 1- \Phi (\sqrt{m/n} \delta(c))$ is strictly monotone decreasing in $\delta(c)$ for fixed $m$ and $n$. Consequently, a two-sided $1-\alpha$ confidence interval for $\Pr_{\theta_j, \sigma_j}(\hthetarepj > c) = 1-\Phi(\sqrt{m/n} \delta(c))$ is given by $[1-\Phi(\sqrt{m/n} \deltaU(c)), 1- \Phi(\sqrt{m/n} \deltaL(c))]$. This is the result shown in (\ref{t_ci}) of Section \ref{finiteNormal}.

\section{Coverage probability simulations \label{cov_prob_sec}}

In this appendix, we investigated the coverage probability of intervals given by (\ref{t_ci}) for the sample mean and linear regression. For $k=1,\ldots,K$, we generated data $\mat{D}^k$ and estimated $\bhthetan^k$ and $\hSigman^k$ with data $\mat{D}^k$. We then estimated the coverage probability at cutoff $c$ as $\hat{P}(c) = K^{-1} \sum_{k=1}^K \mathbbm{1} \left[ \Pr_{\theta_j, \sigma_j}(\hthetarepj > c) \in I^k_c \right]$ for intervals $I^k_c$ formed with (\ref{t_ci}). Throughout, we set $\alpha = 0.05$ for a nominal coverage probability of 0.95.

\subsection{Sample mean \label{cp_mean}}

We generated data in the same manner as in Section \ref{examples}. In particular, for each of $k=1,\ldots,K$, we generated data $\mat{D}^k = (y^k_1, \ldots, y^k_n)^\tran$ where $y^k_i \overset{\text{i.i.d.}}{\sim} N(\theta, \sigma^2)$, $i=1,\ldots ,n$, for $\theta = 0$ and $\sigma^2 = 1$. Consequently, the true exceedance probability is $\Pr_{\theta, \sigma}(\thetarep > c) = 1 - \Phi(\sqrt{m}c)$.

Results from $K = 10,000$ simulated datasets with $n = 20, 40, 60, 80, 100$ observations and a replication sample size of $m = n$ are shown in Table \ref{mean_table}. As seen in Table \ref{mean_table}, the confidence intervals achieve their nominal coverage probability.

\begin{table}[H]
\centering
\caption{Empirical coverage probability for $1 - \Phi(\sqrt{m}c)$ for different samples sizes, $m = n$, and cutoffs $c$ for the sample mean. $K = 10,000$ datasets generated for each sample size $n$.}
\begin{tabular}{cccccc}
\hline \hline
 & \multicolumn{5}{c}{$n$} \\
 \cline{2-6}
\hline
$c$ & 20    & 40    & 60    & 80  & 100 \\
-0.5 & 0.953 & 0.945 & 0.951 & 0.946 & 0.951 \\
-0.4 & 0.953 & 0.946 & 0.950 & 0.946 & 0.950 \\
-0.3 & 0.953 & 0.946 & 0.950 & 0.946 & 0.951 \\
-0.2 & 0.953 & 0.947 & 0.949 & 0.946 & 0.951 \\
-0.1 & 0.952 & 0.948 & 0.949 & 0.945 & 0.951 \\
 0.0 & 0.952 & 0.948 & 0.949 & 0.946 & 0.951 \\
 0.1 & 0.952 & 0.948 & 0.948 & 0.945 & 0.951 \\
 0.2 & 0.951 & 0.946 & 0.949 & 0.947 & 0.952 \\
 0.3 & 0.951 & 0.947 & 0.949 & 0.946 & 0.951 \\
 0.4 & 0.951 & 0.947 & 0.949 & 0.947 & 0.951 \\
 0.5 & 0.950 & 0.947 & 0.950 & 0.947 & 0.951
\end{tabular}
\label{mean_table}
\end{table}
\subsection{Linear regression \label{cp_slr}}

We set the design matrix to $\mat{X} = [\bm{1}, \bm{x}]$ for $n \times 1$ vectors $\bm{1} = (1, \ldots, 1)^\tran$ and $\bm{x} = (x_1, \ldots, x_n)^\tran$ where $\bm{x}$ was fixed for all simulations of the same sample size $n$ ($x_i$ were initially generated as i.i.d. uniform(0, 10) random variables). We set the regression coefficients to $\bm{\theta} = (\theta_1, \theta_2)^\tran = (1, 2)^\tran$.  For $k=1, \ldots, K$, we generated responses as $\bm{y}^k \sim N(\mat{X} \bm{\theta}, \nu^2 \mat{I}_n)$ for variance $\nu^2 = 25$. We then fit a linear model to obtain $\bhthetan^k = (\mat{X}^\tran \mat{X})^{-1} \mat{X}^\tran \bm{y}^k$ and estimated the variance as $\hSigman^k = n \hat{\nu}^{2,k} (\mat{X}^\tran \mat{X})^{-1}$ where $\hat{\nu}^{2,k} = (n-2)^{-1} \| \bm{y}^k - \bm{\hat{y}}^k \|_2^2$ and $\bm{\hat{y}}^k = \mat{X} \bhthetan^k$.

In truth, we have $\hat{\theta}^k_2 \sim N(\theta_2, \sigma^{2}_2 / n)$ where $\sigma^{2}_2 = n \nu^2 (\mat{X}^\tran \mat{X})^{-1}_{2,2}$. Consequently, the true exceedance probability is $\Pr_{\theta_2, \sigma_2}(\hat{\theta}^{\text{rep}}_2 > c) = 1 - \Phi(\sqrt{m}(c - 2) / \sqrt{n 25 (\mat{X}^\tran \mat{X})^{-1}_{2,2}})$.

Results from $K = 10,000$ simulated datasets with $n = 20, 40, 60, 80, 100$ observations and a replication sample size of $m = n$ are shown in Table \ref{slr_table}. As seen in Table \ref{slr_table}, the confidence intervals achieve their nominal coverage probability.

\begin{table}[H]
\centering
\caption{Empirical coverage probability for $1 - \Phi(\sqrt{m}(c - 2) / \sqrt{25n(\mat{X}^\tran \mat{X})^{-1}_{2,2}}$ for different samples sizes $n = m$ and cutoffs $c$ for the slope of a simple linear regression. $K = 10,000$ datasets simulated for each sample size $n$.}
\begin{tabular}{cccccc}
\hline \hline
 & \multicolumn{5}{c}{$n$} \\
 \cline{2-6}
Cutoff & 20    & 40    & 60    & 80  & 100 \\
\hline
1.0 & 0.954 & 0.950 & 0.951 & 0.947 & 0.949 \\
1.1 & 0.954 & 0.950 & 0.951 & 0.947 & 0.949 \\
1.2 & 0.953 & 0.950 & 0.951 & 0.947 & 0.949 \\
1.3 & 0.952 & 0.950 & 0.951 & 0.947 & 0.949 \\
1.4 & 0.953 & 0.949 & 0.950 & 0.947 & 0.950 \\
1.5 & 0.952 & 0.950 & 0.950 & 0.947 & 0.951 \\
1.6 & 0.952 & 0.950 & 0.950 & 0.947 & 0.951 \\
1.7 & 0.953 & 0.949 & 0.950 & 0.948 & 0.950 \\
1.8 & 0.952 & 0.949 & 0.950 & 0.949 & 0.949 \\
1.9 & 0.953 & 0.950 & 0.951 & 0.949 & 0.950 \\
2.0 & 0.952 & 0.950 & 0.951 & 0.948 & 0.950 \\
2.1 & 0.953 & 0.950 & 0.951 & 0.948 & 0.950 \\
2.2 & 0.952 & 0.951 & 0.951 & 0.948 & 0.949 \\
2.3 & 0.952 & 0.951 & 0.951 & 0.948 & 0.949 \\
2.4 & 0.952 & 0.952 & 0.951 & 0.948 & 0.949 \\
2.5 & 0.951 & 0.951 & 0.951 & 0.948 & 0.949 \\
2.6 & 0.950 & 0.951 & 0.952 & 0.949 & 0.949 \\
2.7 & 0.950 & 0.950 & 0.953 & 0.949 & 0.950 \\
2.8 & 0.950 & 0.950 & 0.953 & 0.949 & 0.950 \\
2.9 & 0.952 & 0.951 & 0.953 & 0.950 & 0.949 \\
3.0 & 0.950 & 0.951 & 0.952 & 0.949 & 0.949
\end{tabular}
\label{slr_table}
\end{table}

\end{appendices}

\bibliographystyle{apalike}
\bibliography{refs}

\end{document}